\documentclass[a4paper,11pt]{article}
\usepackage[utf8]{inputenc}
\usepackage{amsmath,amssymb,color}
\usepackage[usenames,dvipsnames]{xcolor}
\usepackage{amsthm,enumerate}
\usepackage{float}
\newtheorem{lemma}{Lemma}
\newtheorem{theorem}{Theorem}
\newtheorem{remark}{Remark}
\newtheorem{definition}{Definition}
\DeclareMathOperator*{\argmin}{arg\,min}
\DeclareMathOperator*{\argmax}{arg\,max}
\usepackage{graphicx}
\usepackage{fullpage,natbib}

\newcommand*\Dif{\mathop{}\!\mathrm{D}}

\usepackage{verbatim} 

\title{Confidence Intervals for Maximin Effects in Inhomogeneous Large-Scale
Data}
\author{Dominik Rothenh\"ausler, Nicolai Meinshausen, Peter B\"uhlmann \\ Seminar f\"ur Statistik, ETH Z\"urich}

\begin{document}
\maketitle 

\begin{abstract}
One challenge of large-scale data analysis is that the assumption of an
identical distribution for all samples is often not realistic. An optimal linear regression might, for example, be markedly different for distinct groups of the data.
Maximin effects have been proposed as a computationally attractive way to
estimate effects that are common across all data without fitting a mixture distribution explicitly. 
So far just point estimators of the common maximin effects have been
proposed in \citet{maximin}.  Here we propose asymptotically valid confidence regions for these effects. 
\end{abstract}

\section{Introduction}
Large-scale regression analysis often has to deal with inhomogeneous
data in the sense that samples are not drawn independently from
the same distribution. The optimal
regression coefficient might for example be markedly different in distinct
groups of the data or vary slowly over a chronological ordering of the
samples. One option is then to
either model the exact variation of the regression vector with a
varying-coefficient model in the latter case
\citep{hastie1993varying,fan1999statistical} or to fit a mixture
distribution in the former
\citep{aitkin1985estimation,mclachlan2004finite,figueiredo2002unsupervised}.
For large-scale analysis with many groups of data samples or many
predictor variables this approach might be too expensive
computationally and also yield more information than necessary in
settings where one is just interested in effects that are present in
all sub-groups of data. A maximin effect was defined in
\citet{maximin} as the effect that is common to all sub-groups of
data and a simple estimator based on subsampling of the data was
proposed in \citet{magging}. However, the estimators for maximin
effects proposed so far just yield
point estimators but  we are interested here in
confidence intervals. While we are mostly dealing with low-dimensional
data where the sample size exceeds the number of samples, the results
could potentially be extended to high-dimensional regression using
similar ideas as proposed for example in 
\citet{zhang2011confidence} or \citet{van2013asymptotically}  for the estimation
of optimal linear regression effects for high-dimensional data.

\subsection{Model and notation}
We  first present a model for inhomogeneous data as considered in \citet{maximin}.
Specifically, we look at a special case where the data are split into
several known groups $g=1,\ldots,G$. In each group  $g$, we assume a linear model of the form
\begin{equation}\label{model}
 Y_{g} = \bold{X}_{g} b_{g}^0 + \varepsilon_{g},
\end{equation}
where $Y_{g}$ is a $n$-dimensional response vector of interest, $b_{g}^0$ a
deterministic $p$-dimensional regression parameter vectors and $\bold{X}_g$
a $n\times p$-dimensional design matrix containing in the columns the $n$
observations of $p$ predictor variables.  The noise contributions  $\varepsilon_{g}$  are
assumed to be independent with distribution $\mathcal{N}_n(0,\sigma^2
\mbox{Id}_n)$.  We assume the sample size $n$ to be identical in each
group. Generalizations to varying-coefficient models
\citep{hastie1993varying,fan1999statistical} are clearly possible but notationally more cumbersome. Inhomogeneity is caused by the different parameter vectors in the group. We define $\bold{X}$ as the row-wise concatenation of the design matrices 
$ \bold{X}_{1} ,  \bold{X}_{2} , \ldots, \bold{X}_{G} $ and assume
that the groups are known, that is we know which observations belong to the groups $g=1,\dots,G$, respectively.
 For the distribution of $\bold{X}_{g}$, $g=1,\ldots,G$ we consider different scenarios.

\paragraph{Scenario 1.} Random design. The observations of the
predictor variables are independent samples of an unknown multivariate distribution $F$ with finite fourth moments. We assume this distribution to be common across all groups $g=1,\ldots,G$. 

\paragraph{Scenario 2.} Random design in each group. The  observation in each group are independent samples of an unknown distribution $F_g$ with finite fourth moments. Observations in different groups are independent. The distribution $F_g$ may be different in different groups. 

\medskip

In the following if not mentioned otherwise we assume Scenario 1. The generalization to Scenario 2 is to a large extent only notational.

\subsection{Aggregation}

The question arises how the inhomogeneity of the optimal regression across groups is taken into account when trying to estimate the relationship between the predictor variables and the outcome of interest. Several known alternatives such as mixed effects models \citep{pinheiro2000mixed}, mixture models \citep{mclachlan2004finite} and clusterwise regression models \citep{desarbo1988maximum} are possibilities and are useful especially in cases where the group structure is unknown. They are at the same time computationally quite demanding. 

A computationally attractive alternative (especially for the discussed case of known groups but also more generally) is to estimate the optimal regression coefficient separately in each group, which are either known (as assumed in the following) or sampled in some appropriate form \citep{maximin}.  
As estimates for the $b_g^0$ we use in the following standard least squares estimators
\begin{equation*}
 \hat b_g = \argmin_{b \in \mathbb{R}^p} \| Y_g - \bold{X}_g b \|_2^2.
\end{equation*}
The restriction to this estimator is only for the purpose of simplicity. 
Regularization can be added if necessary but the essential issues are
already visible for least-squares estimation.  

Now a least-squares estimator is obtained in each group of data and
the question is how these different estimators can be aggregated.
The simplest and perhaps most widely-used aggregation scheme is bagging (bootstrap aggregation), as proposed by \citet{brei96}, where the aggregated estimator is given by 
\begin{equation} \label{eq:bagging}
{\bf Bagging}: \quad \hat b := \sum_g w_g \hat{b}_g,\qquad \mbox{where  } w_g=\frac 1 G \;\; \forall g=1,\ldots,G.
\end{equation}
If the data from different groups originate from an independent sampling mechanism, the bagging is a useful aggregation scheme. In particular, computing the bagged estimator is computationally more attractive than computing a single least-squares estimator as it allows the data to be split up into distinct subsets and processed independently before the aggregation step. For inhomogeneous data, the variability of the estimates $\hat b_g$ for $g=1,\ldots,G$ allows to gain some insight into the nature of the inhomogeneity. However, as argued in \cite{magging},  averaging is the wrong aggregation mechanism for inhomogeneous data.

\subsection{Maximin effect and magging} 
For inhomogeneous data, instead of looking for an estimator that works best on average, \citet{maximin} proposed to aim to maximize the minimum explained variance across several settings $g=1,\dots,G$. To be more precise, in our setting,
\begin{equation*}
	b_{\text{maximin}} := \argmax_{b \in \mathbb{R}^p} \min_{g
          =1,\dots,G} V(b,b_g^0),
\end{equation*}
where $V(b,b_g^0)$ is the explained variance in group $g$ (with true
regression vector $b_g^0$) when using a
regression vector $b$. That is 
\begin{align*} V(b,b_g^0) & :=\mathbb{E}\| Y_g \|_2^2-\mathbb{E} \|Y_g-X_g b
\|_2^2 \\ & =  2 b^t \Sigma^0  b_g^0 - b^t \Sigma^0 b   ,\end{align*}
where $\Sigma^0 := \mathbb{E} \hat \Sigma$ with $\hat \Sigma := (nG)^{-1}\bold{X}^t \bold{X}$ is the
sample covariance matrix.
In words, the maximin effect is defined as the
estimator that maximises the explained variance in the most adversarial
scenario (``group''). In this sense, the maximin effect is the
effect that is common among all groups in the data and ignores the
effects that are present in some groups but not in others.
It was shown in \citet{maximin} that the definition above is
equivalent to 
\begin{equation*}
	b_{\text{maximin}} = \argmin_{b \in \text{CVX}(B^0)} b^t \Sigma^0 b,
\end{equation*}
where $B^0= (b_1^0,\ldots,b_G^0) \in \mathbb{R}^{p \times G}$ the
matrix of the regression parameter vectors and $CVX(B^0)$ denotes the
closed convex hull of the $G$ vectors in $B^0$.
The latter definition motivates \textbf{ma}ximin \textbf{agg}regat\textbf{ing}, or \textbf{magging} \citep{magging}, which is the convex combination that minimizes the $\ell_2$-norm of the fitted values:
\begin{align*}
	\text{\textbf{Magging: }} \hat b := \sum_{g=1}^G \alpha_g
        \hat b_g,\qquad 
	\mbox{where } \;\; \alpha &:= \argmin_{\alpha \in C_G} \| \sum_{g=1}^G \alpha_g \bold{X} \hat b_g \|_2
	\;\; \mbox{and }  \\ C_g &:= \{ \alpha \in \mathbb{R}^G : \min_{g} \alpha_g \ge 0 \text{ and } \sum_g \alpha_g =1  \}	
\end{align*}
The magging regression vector is unique if $\bold{X}^t \bold{X}$ is
positive definite. Otherwise, we can only identify the prediction
effect $\bold{X} b_{\text{maximin}}$ and the solution above is meant
to be any member of the feasible set of solutions. 
To compute the estimator, the dataset is split into several smaller
datasets and we assume here that the split separates the data into
already known groups. After computing estimators on all of these
groups separately, possibly
in parallel, magging can be used to find common effects of all
datasets. This is in particular interesting if there is inhomogeneity in
the data. For known groups, as in our setting, magging can be interpreted as the plug-in
estimate of the maximin effect. 

In the following we need additional notation. For $B := (b_1,\ldots.,b_G)
\in \mathbb{R}^{p \times G}$ and for $\Sigma \in \mathbb{R}^{p \times p}$ positive definite define
\begin{equation*}
M_\Sigma(B) := \argmin_{b \in \text{CVX}(B)} b^t \Sigma b
\end{equation*}
We obtain the original definition of the magging
estimator for $M_{\hat \Sigma}(\hat B)$ with $\hat B = (\hat b_1,\ldots, \hat b_G
)$ and the maximin effect with
$M_{ \Sigma^0}( B^0)$.
 
\subsection{Novel contribution and organization of the paper}
So far only point estimators of maximin effects have been proposed in
the literature. In Section~\ref{theory} we discuss an asymptotic
approach to construct confidence regions for the  maximin
effect. Specifically, we calculate the asymptotic distribution of
$\sqrt{n} (M_{\hat \Sigma}(\hat B)-M_{ \Sigma^0}( B^0))$ and derive
corresponding asymptotically valid confidence regions. This
gives us (asymptotically) tight confidence regions and will shed more
light on the (asymptotic) nature of the  fluctuations of the magging
estimator. We evaluate the actual coverage of this approximation on
simulated datasets in Section~\ref{practice}. The proofs of the
corresponding theorems and an alternative non-asymptotic approach can
be found in the appendix. The advantages and disadvantages of the
approaches are discussed in Section~\ref{sec:discussion}.

\section{Confidence intervals for maximin effects}\label{theory}

In Scenario 1, the random design of the predictor variables is
identical across all groups of data. For fixed $G$ and $n \rightarrow \infty$, we
can then use the delta method to
derive the asymptotic distribution of the scaled difference between
the true and estimated magging effects \[ \sqrt{n}(M_{\hat \Sigma}(\hat
B)-M_{\Sigma^0}(B^0)).\] 
This in turn allows to construct confidence intervals for the true
maximin effects. Let $W(\hat B,\hat \Sigma)$ be
a consistent estimator of the (positive definite) variance of the Gaussian distribution
$\lim_{n \rightarrow \infty} \sqrt{n}(M_{\hat \Sigma}(\hat B)-M_{\Sigma^0}(B^0))$.
Let $\alpha > 0$. Choose $\tau$ as the $(1-\alpha)$-quantile of the $\chi_p^2$-distribution.
Define then a confidence region as 
\begin{equation}\label{eq:CI}
\bold{C}(\hat \Sigma,\hat B) := \{ M \in \mathbb{R}^p :   (M_{\hat \Sigma}(\hat B)-M)^t W(\hat B,\hat \Sigma)^{-1}  (M_{\hat \Sigma}(\hat B)-M) \le \frac{\tau}{n} \}
\end{equation}
The definition of  $W(\hat B,\hat \Sigma)$ is deferred to the
appendix, Section \ref{defs}. We will show in the following that we
obtain asymptotically valid confidence intervals with this
approach. For simplicity, we work with Scenario~1 here and assume that the
noise contributions  $\varepsilon_{g}$ in equation \eqref{model} are
 independent with distribution $\mathcal{N}_n(0,\sigma^2 \mbox{Id}_n)$. Furthermore,  each $\bold{X}_{g} \in
\mathbb{R}^{n \times p}$ is assumed to have full rank, requiring $p \le n$. Though the
framework for the result is a Gaussian linear model, it can be easily extended to more general settings.

The following theorem describes the coverage properties of the
confidence interval~\eqref{eq:CI}. In the following, for $x,y \in
\mathbb{R}^p$ and $\Sigma \in \mathbb{R}^{p \times p}$ positive definite define $\langle x,y
\rangle_\Sigma := x^t \Sigma y$.
\begin{theorem}\label{coverage}
 Let $\Sigma^0$ be positive definite. Let $M_{\Sigma^0}(B^0) = \sum_{g =1}^G \alpha_g b_g^0 $ with $\alpha_g \ge 0$,
 $\sum_{g=1}^G \alpha_g =1$ and let this representation be unique. Let
 $| \{ g: \alpha_g \neq 0 \} | > 1$. Suppose that the hyperplane
 orthonormal to the maximin effect contains only ``active'' $b_g^0$, i.e. $
 \{ b_g^0 : g =1,...,G \} \cap \{ M \in
 \mathbb{R}^p : \langle M-M_{\Sigma^0}(B^0),M_{\Sigma^0}(B^0)  
 \rangle_{\Sigma^0}=0  \}  \subset \{ b_g^0 : \alpha_g \neq 0 \}$.
Then
\begin{equation*}
\lim_{n \rightarrow \infty}\mathbb{P} [ M_{ \Sigma^0}( B^0) \in \bold{C}(\hat \Sigma,\hat B)] = 1-\alpha.
\end{equation*}
\end{theorem}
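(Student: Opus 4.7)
My plan is to prove Theorem~\ref{coverage} by a standard delta-method / Wald argument with three ingredients: (i)~joint asymptotic normality of $(\hat B,\hat\Sigma)$; (ii)~differentiability of the map $\Phi:(B,\Sigma)\mapsto M_\Sigma(B)$ at $(B^0,\Sigma^0)$ under the stated hypotheses; (iii)~consistency of $W(\hat B,\hat\Sigma)$ for the asymptotic covariance $W^\ast$ produced by (i)--(ii). Given these, the delta method gives $\sqrt n(M_{\hat\Sigma}(\hat B)-M_{\Sigma^0}(B^0))\Rightarrow \mathcal N(0,W^\ast)$, and Slutsky turns the quadratic form appearing in the definition of $\mathbf{C}(\hat\Sigma,\hat B)$ into an asymptotic $\chi_p^2$ variable, so the choice of $\tau$ yields coverage $1-\alpha$ exactly as in the usual Wald construction.

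\textbf{Inputs (i) and (iii).} For~(i), write $\hat b_g = b_g^0+(X_g^t X_g)^{-1}X_g^t\varepsilon_g$, so under Scenario~1 standard OLS asymptotics with finite fourth moments give $\sqrt n(\hat b_g-b_g^0)\Rightarrow \mathcal N(0,\sigma^2(\Sigma^0)^{-1})$; the $\hat b_g$ are independent across $g$ by independence of the $\varepsilon_g$, and the CLT applied to the summands of $\hat\Sigma$ gives $\sqrt n(\hat\Sigma-\Sigma^0)\Rightarrow Z_\Sigma$. Joint normality of $(\sqrt n(\hat B-B^0),\sqrt n(\hat\Sigma-\Sigma^0))$ to a centered Gaussian then follows by the Cram\'er--Wold device. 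For~(iii) I expect $W(\hat B,\hat\Sigma)$ to plug a residual-based consistent estimate of $\sigma^2$ together with $\hat B$ and $\hat\Sigma$ into the explicit Jacobian formula obtained in~(ii); consistency then follows from continuity, once one verifies that the estimated active set eventually coincides with $S:=\{g:\alpha_g\neq 0\}$.

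\textbf{Differentiability of $\Phi$ -- the main obstacle.} The heart of the proof is~(ii), since $\Phi$ is only piecewise smooth: $M_\Sigma(B)$ is the $\Sigma$-projection onto the polytope $\text{CVX}(B)$. Writing KKT conditions for $\min_{\alpha\in C_G}\bigl(\sum_g\alpha_g b_g^0\bigr)^t\Sigma^0\bigl(\sum_g\alpha_g b_g^0\bigr)$ at the unique optimum $\alpha^\star$, and using $\sum_g\alpha_g^\star=1$ to eliminate the equality multiplier, one finds that the multiplier of the constraint $\alpha_g\ge 0$ equals $2\langle b_g^0-M_{\Sigma^0}(B^0),M_{\Sigma^0}(B^0)\rangle_{\Sigma^0}$, which is automatically non-negative and vanishes on the active set. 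The hyperplane hypothesis is precisely that this multiplier is strictly positive for every inactive $g$, i.e.\ strict complementarity holds. Together with uniqueness of $\alpha^\star$ and positive-definiteness of $\Sigma^0$, strict complementarity implies that $S$ is stable under small perturbations of $(B,\Sigma)$: in a neighborhood of $(B^0,\Sigma^0)$ the original QP coincides with the reduced equality-constrained program
\begin{equation*}
\min_{\alpha_S:\ \sum_{g\in S}\alpha_g=1}\Bigl(\sum_{g\in S}\alpha_g b_g\Bigr)^t\Sigma\Bigl(\sum_{g\in S}\alpha_g b_g\Bigr),
\end{equation*}
whose linear KKT system is smooth in $(B,\Sigma)$, and the assumption $|S|>1$ keeps the equality Lagrangian non-degenerate. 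The implicit function theorem then delivers continuous differentiability of $\Phi$ at $(B^0,\Sigma^0)$ together with an explicit Jacobian expressible through the active vertices, the weights $\alpha^\star$, and $\Sigma^0$. This piecewise-to-smooth reduction is where the hypotheses are used essentially; once it is in place, steps (i) and (iii) combine via the delta method and Slutsky to close the argument.
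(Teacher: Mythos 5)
Your overall strategy (delta method plus Slutsky, with differentiability of $(B,\Sigma)\mapsto M_\Sigma(B)$ as the crux) is exactly the paper's, and your conclusion is reachable this way; but your route to the differentiability step is genuinely different and worth comparing. The paper establishes local constancy of the active set by a geometric argument (Lemma~\ref{continuity}) and then computes the derivatives of $M_\Sigma(B)$ in $B$ and in $\Sigma$ \emph{explicitly}, via affine projections onto the faces of $\mathrm{CVX}(B)$ (Lemma~\ref{diff1}) and via the orthogonality condition $M_\Sigma(B)^t\Sigma D=0$ (Lemma~\ref{diff2}); these explicit formulas are what the plug-in estimator $W(\hat B,\hat\Sigma)$ is built from. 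You instead read the hyperplane hypothesis as strict complementarity of the KKT multipliers $2\langle b_g^0-M_{\Sigma^0}(B^0),M_{\Sigma^0}(B^0)\rangle_{\Sigma^0}$ (a correct and illuminating identification: it is precisely the paper's condition), reduce locally to an equality-constrained QP on the active set, and invoke the implicit function theorem. This is cleaner and more standard as an existence argument for the derivative, but it buys less for this particular theorem: since $\mathbf{C}(\hat\Sigma,\hat B)$ is defined through the paper's \emph{specific} $W$, you still have to extract the Jacobian from your KKT system in closed form and check it agrees with the quantities $\Dif_g M$ and $V$ that $W$ estimates; the paper's explicit computation does that work up front.

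One step in your input~(i) is too quick and is where the paper spends real effort. You assert joint asymptotic normality of $\bigl(\sqrt n(\hat B-B^0),\sqrt n(\hat\Sigma-\Sigma^0)\bigr)$ "by Cram\'er--Wold", but $\hat b_g-b_g^0=(X_g^tX_g)^{-1}X_g^t\varepsilon_g$ and $\hat\Sigma-\Sigma^0$ are \emph{not} independent in finite samples (both are functions of $\bold X$), so the covariance structure of the joint limit -- in particular, that the two blocks of the asymptotic variance in Theorem~\ref{maintheorem} simply add -- is not automatic. The paper's Part~(6) proves asymptotic independence by conditioning on $\bold X$, noting that the conditional law of $\sqrt n(\hat b_g-b_g^0)$ is Gaussian with covariance $\sigma^2(n^{-1}X_g^tX_g)^{-1}\to\sigma^2(\Sigma^0)^{-1}$ deterministically, and passing to the limit by dominated convergence. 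You should either reproduce an argument of this type or at least verify that the cross-covariances vanish (which they do, because $\varepsilon_g$ is mean-zero and independent of $\bold X$) and that a joint CLT for the linear combinations actually holds despite the random normalization $(n^{-1}X_g^tX_g)^{-1}$. With that repaired, and with the Jacobian-matching noted above, your proof closes.
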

In other words, the set defined in~\eqref{eq:CI} is an asymptotically
valid confidence region for $M_{ \Sigma^0}(B^0)$ under the made
assumptions.  If the true coefficients $b_g^0$  in each group are drawn from a multivariate density, then the assumptions are fulfilled with probability one.

 The special case $| \{ g: \alpha_g \neq 0 \} | = 1$ is excluded, as the
magging estimator is identical to a  solution in one individual group
in this case, which is equivalent to 
$M_{\hat \Sigma}(\hat B)= \hat b_g $ for a $g \in \{1,\ldots,G
\}$, up to an asymptotically negligible set. This case is mainly excluded
for notational reasons. The assumptions of Theorem \ref{coverage} guarantee that the derivative of
magging $M_\Sigma(B)$ exists and is continuous at $B^0$ and
$\Sigma^0$. If the latter condition is violated, it is still possible to obtain
asymptotic bounds in the more general setting, as $\lim_{n \rightarrow
  \infty} \sqrt{n}(M_{\hat \Sigma}(\hat B)-M_{\Sigma^0}(B^0))$ is still
subgaussian. We explore the violation of these assumptions with simulation
studies in the next section. The proof of Theorem \ref{coverage} is an
application of Slutsky's Theorem, combined with the following result about
the asymptotic variance of the magging estimator.
\begin{theorem}\label{maintheorem}
Let the assumptions of Theorem~\ref{coverage} be true. 
Then, for $n \rightarrow \infty$,
 \begin{equation}\label{toprove}
 \sqrt{n} \left(M_{\hat \Sigma}(\hat B)-M_{\Sigma^0}( B^0)  \right)  \rightharpoonup  \mathcal{N} \Big( 0, \sigma^2 \sum_{g \in A(B^0,\Sigma^0)} \Dif_g^t M_{\Sigma^0}(B^0) \Sigma^{-1} \Dif_g M_{\Sigma^0}(B^0)+V(B_{A(B^0,\Sigma^0)}^0,\Sigma^0) \Big).
 \end{equation}
\end{theorem}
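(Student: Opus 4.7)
}

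The plan is to combine the delta method with the joint limiting distribution of $(\hat B,\hat\Sigma)$. The only non-routine step is verifying that the functional $(\Sigma,B)\mapsto M_\Sigma(B)$ is classically differentiable at $(\Sigma^0,B^0)$; once this is in hand the rest of the argument is an exercise in standard CLTs for OLS and sample covariances.

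\textbf{Step 1 (local smoothness of $M$).} First I would show that under the assumptions of Theorem~\ref{coverage} the map $(\Sigma,B)\mapsto M_\Sigma(B)$ is continuously differentiable at $(\Sigma^0,B^0)$. Uniqueness of the representation $M_{\Sigma^0}(B^0)=\sum_g\alpha_g b_g^0$ together with the hyperplane hypothesis yields strict complementary slackness for the defining quadratic program: every inactive $b_g^0$ (with $\alpha_g=0$) satisfies $\langle b_g^0-M_{\Sigma^0}(B^0),M_{\Sigma^0}(B^0)\rangle_{\Sigma^0}>0$ strictly. By continuity, for all $(\Sigma,B)$ in a small neighborhood of $(\Sigma^0,B^0)$ the active set $A:=A(B^0,\Sigma^0)$ is preserved, and the magging optimum is given by the closed-form expression
\[
M_\Sigma(B)=B_A\,\alpha_A(\Sigma,B),\qquad
\alpha_A(\Sigma,B)=\frac{(B_A^t\Sigma B_A)^{-1}\mathbf{1}}{\mathbf{1}^t (B_A^t\Sigma B_A)^{-1}\mathbf{1}},
\]
obtained by solving the equality-constrained QP with the constraints $\alpha_g\ge0$ ignored. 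This expression is $C^\infty$ in $(\Sigma,B_A)$ and does not depend on $b_g$ for $g\notin A$; hence $\Dif_g M_{\Sigma^0}(B^0)=0$ for inactive $g$ and the partials $\Dif_g M_{\Sigma^0}(B^0)$ and $\Dif_\Sigma M_{\Sigma^0}(B^0)$ exist and are continuous.

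\textbf{Step 2 (joint asymptotics of $\hat B$ and $\hat\Sigma$).} Standard OLS asymptotics give, for each $g$,
\[
\sqrt{n}(\hat b_g-b_g^0)=\bigl(n^{-1}\bold{X}_g^t\bold{X}_g\bigr)^{-1}n^{-1/2}\bold{X}_g^t\varepsilon_g\ \rightharpoonup\ \mathcal{N}\!\bigl(0,\sigma^2(\Sigma^0)^{-1}\bigr),
\]
and independence of samples across groups makes the joint limit of the $G$ vectors independent. A multivariate CLT (using finite fourth moments of $F$) gives the normal limit of $\sqrt{nG}(\hat\Sigma-\Sigma^0)$. The key observation is that $\hat\Sigma$ is a function of the design $\bold{X}$ alone, while each $\hat b_g-b_g^0$ is linear in $\varepsilon_g$ and conditionally centered given $\bold{X}$; hence the two blocks are asymptotically independent and no cross-covariance term will appear in the limit.

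\textbf{Step 3 (delta method and identification).} Combining Steps~1 and 2, the delta method yields
\[
\sqrt{n}\bigl(M_{\hat\Sigma}(\hat B)-M_{\Sigma^0}(B^0)\bigr)=\sum_{g\in A}\Dif_g M_{\Sigma^0}(B^0)\,\sqrt{n}(\hat b_g-b_g^0)+\Dif_\Sigma M_{\Sigma^0}(B^0)\,\sqrt{n}(\hat\Sigma-\Sigma^0)+o_p(1).
\]
By the independence of the two blocks and across groups, the asymptotic covariance decomposes: the first sum contributes $\sigma^2\sum_{g\in A}\Dif_g^t M_{\Sigma^0}(B^0)(\Sigma^0)^{-1}\Dif_g M_{\Sigma^0}(B^0)$, coming from the OLS variance $\sigma^2(\Sigma^0)^{-1}$; the second contributes exactly the quantity $V(B_A^0,\Sigma^0)$, which is the appendix's notation for $\Dif_\Sigma M_{\Sigma^0}(B^0)$ applied to the asymptotic variance of $\sqrt{n}(\hat\Sigma-\Sigma^0)$. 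Summing the two matches the claimed limit.

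\textbf{Main obstacle.} The delicate part is Step~1: proving that the active set is locally constant and that the above closed form is a $C^1$ extension of the solution to the original inequality-constrained QP. The uniqueness and hyperplane assumptions are essential, since even mild degeneracy (an inactive $b_g^0$ lying on the separating hyperplane, or a non-unique representation) introduces a kink in $M_\Sigma(B)$ and destroys Hadamard differentiability, leaving at best a subgaussian limit, as noted by the authors. The actual computation of $\Dif_\Sigma M_{\Sigma^0}(B^0)$ through the inverse $(B_A^t\Sigma B_A)^{-1}$ and the scalar normalizer is routine but is the main bookkeeping task needed to make the appendix definition of $V(B_A^0,\Sigma^0)$ explicit.
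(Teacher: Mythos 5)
Your proposal is correct in outline and follows the same overall strategy as the paper (local constancy of the active set, continuous differentiability of $(\Sigma,B)\mapsto M_\Sigma(B)$, a first-order Taylor/delta-method expansion, and asymptotic independence of the $\hat B$-block and the $\hat\Sigma$-block), but your Step 1 takes a genuinely different route to the key smoothness lemma. The paper computes the directional derivatives geometrically, by writing $M_\Sigma(\tilde B)$ as an explicit affine projection onto the face spanned by the active $b_g$ and expanding the projection under perturbations (Lemmas~\ref{diff1} and~\ref{diff2}); you instead observe that once the active set is frozen and all active weights are strictly interior, magging coincides locally with the equality-constrained QP whose solution is the closed form $\alpha_A(\Sigma,B)=(B_A^t\Sigma B_A)^{-1}\mathbf{1}/\mathbf{1}^t(B_A^t\Sigma B_A)^{-1}\mathbf{1}$, which is smooth wherever $B_A^t\Sigma B_A$ is invertible. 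Your KKT route is cleaner and delivers both partial derivatives in one stroke; the paper's projection formula is what feeds the explicit plug-in estimator $W(\hat B,\hat\Sigma)$ used for the confidence region, so the two are complementary rather than in conflict (and one should check they agree, since Lemma~\ref{diff2} is exactly the derivative of your closed form in $\Sigma$). Two points in your sketch deserve more care. First, the asymptotic independence in Step 2 does not follow from conditional centering of $\hat b_g-b_g^0$ given $\bold{X}$ alone: that only gives vanishing cross-covariance. You need the \emph{joint} limit of $\bigl(n^{-1/2}\bold{X}_g^t\varepsilon_g,\sqrt{n}(\hat\Sigma-\Sigma^0)\bigr)$ to be Gaussian (a joint CLT over the rows, using $\mathbb{E}[\varepsilon]=0$ and independence of $\varepsilon$ and $\bold{X}$ to kill the cross term), after which uncorrelatedness does imply independence; the paper instead proves this by a conditioning and dominated-convergence argument on test functions. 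Second, the delta-method remainder requires continuity of the derivative in a whole neighborhood (mean-value form), not just differentiability at the point; your closed form supplies this, but it should be said explicitly.
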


Here, $\Dif_g$ denotes the differential in direction $b_g$. This derivative is
calculated in the appendix, see Section~\ref{defs}. The set $A(B,\Sigma)
\subset \{ 1,\ldots,G\}$ denotes indices $g$ for which $b_g$ has nonvanishing
coefficient $\alpha_g$ in one of the convex combinations
$M_\Sigma(B)=\sum_{g=1,\ldots,G} \alpha_g b_g$ with $\alpha_g \ge 0$,
$\sum_{g=1,\ldots,G} \alpha_g =1$. Note that by the assumptions of Theorem
\ref{coverage} this convex combination is unique for
$M_{\Sigma^0}(B^0)$. The definition of $V(B_{A(B,\Sigma)},\Sigma)$ is
somewhat lengthy and can be found in the appendix, Section~\ref{defs}.

The first summand in the variance in formula \eqref{toprove} is due to fluctuations of the estimator of $B^0$, the second summand is due to fluctuations of the estimator of $\Sigma^0$. 
If $\Sigma^0$ is known in advance, we can use $\hat \Sigma := \Sigma^0$ and in the theorem above $V = 0$. Table~\ref{illustrate} is an illustration of Theorem \ref{maintheorem}.

\begin{figure}
\includegraphics[scale=0.3]{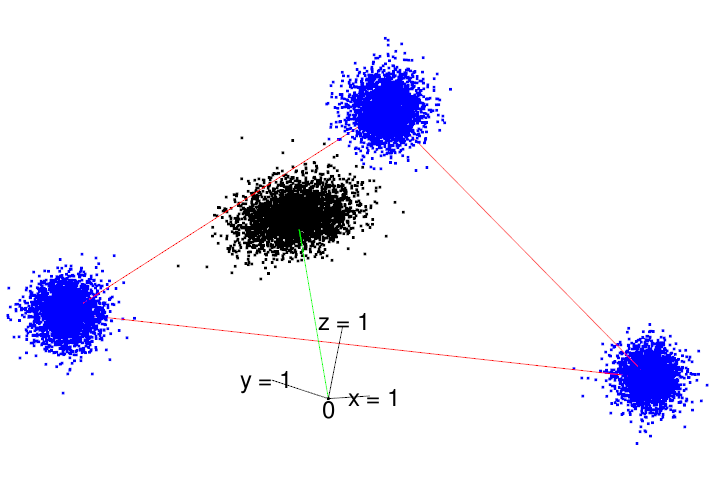}
\includegraphics[scale=0.3]{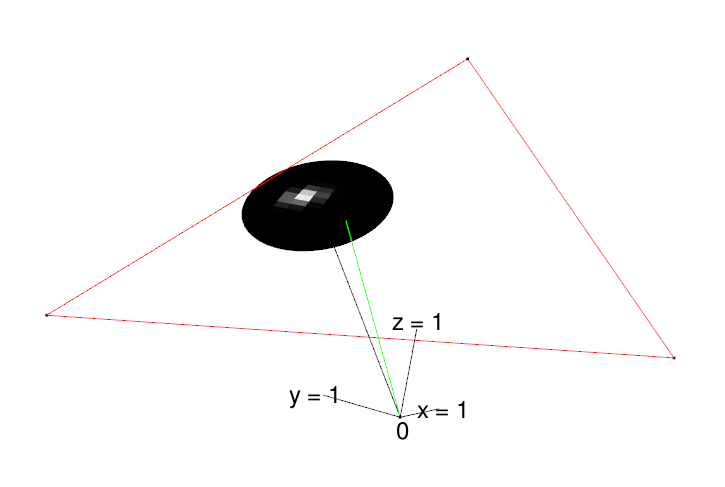}
\caption{\it An illustration of 
  Theorems~\ref{coverage} and~\ref{maintheorem}. On the left
  hand side the blue dots represent 3000 realizations of $\hat b_g$, $g=1,2,3$
  with dimension $p=3$. The black dots are the
  corresponding magging estimates $M_{\hat \Sigma}(\hat B)$. The green line
  indicates the true maximin effect $M_{\Sigma^0}(B^0)$. On the right hand side,
  the black line indicates one of the  $M_{\hat \Sigma}(\hat B)$ with the corresponding approximate 95\%-confidence region calculated with the terms of equation \eqref{eq:CI}.}\label{illustrate}
\end{figure}

\section{Numerical Examples}\label{practice}

The aim of this section is to evaluate the actual coverage of the
approximate confidence regions as defined above. We study several
examples. They have in common that the entries in $\bold X$ are
i.i.d.\ $\mathcal{N}(0,1)$. Furthermore the $\varepsilon_g$
 are
i.i.d.\ $\mathcal{N}(0,\text{Id}_n)$ and independent of $\bold X$. The
tables show the coverage of the
  true maximin effect $M_{\Sigma^0}(B^0)$ by the proposed 95\% confidence
regions. We calculate the confidence intervals only for $p<n$ scenarios as
long as least squares estimators are used (Tables 1-3), while the
  case of $p\ge n$ is covered in tables~4 and~5 by the use of a ridge
  penalty. All simulations were run $1000$ times.

In the setting of Table~\ref{tablecoverage} all assumptions of Theorem~\ref{coverage} are
satisfied. As expected, for large $p$ the convergence of the actual
coverage seems to be slower. Note that for validity of Theorem~\ref{coverage} it is not
necessary that $p=G$, as we have asymptotically tight coverage for all $1 <
G \le p$.

\begin{table}[ht]
\centering
\begin{tabular}{rrrrrrrrrr}
  \hline
 & $n= 5$ & 10 & 15  & 100 & 200 & 500 & 1000 & 2000 & 4000\\ 
  \hline
$p=3$ & 0.70 & 0.78 & 0.82 & 0.92 & 0.94 & 0.95 & 0.94 & 0.94 & 0.95 \\ 
  5 &  & 0.69 & 0.76 & 0.90 & 0.93 & 0.95 & 0.94 & 0.95 & 0.95 \\ 
  10 &  &  & 0.62 & 0.84 & 0.88 & 0.94 & 0.95 & 0.96 & 0.94 \\ 
  15 &  &  &  & 0.78 & 0.85 & 0.93 & 0.92 & 0.95 & 0.95 \\ 
  20 &  &  &  & 0.72 & 0.83 & 0.90 & 0.91 & 0.95 & 0.94 \\ 
  40 &  &  &  & 0.54 & 0.63 & 0.79 & 0.88 & 0.91 & 0.94 \\ 
  80 &  &  &  & 0.57 & 0.38 & 0.50 & 0.74 & 0.85 & 0.92 \\
   \hline
\end{tabular}
\caption{$b_g^0 = e_g$, $g=1,\ldots,G=p$, where the $e_g$ denote the vectors of the standard basis, $1000$ iterations. The  coverage can be seen to be approximately correct if $n$ is sufficiently large.}\label{tablecoverage}
\end{table}

\begin{table}[H]
\centering
\begin{tabular}{rrrrrrrrrr}
  \hline
 & $n= 5$ & 10 & 15  & 100 & 200 & 500 & 1000 & 2000 & 4000\\ 
  \hline
$p=3$ & 0.64 & 0.84 & 0.91 & 0.97 & 0.96 & 0.82 & 0.98 & 0.96 & 0.97 \\ 
  5 &  & 0.61 & 0.79 & 0.99 & 0.97 & 0.88 & 0.82 & 0.91 & 1.00 \\ 
  10 &  &  & 0.23 & 0.99 & 0.99 & 1.00 & 0.99 & 0.93 & 0.98 \\ 
  15 &  &  &  & 0.99 & 0.99 & 1.00 & 1.00 & 0.99 & 0.99 \\ 
  20 &  &  &  & 0.99 & 1.00 & 0.99 & 1.00 & 1.00 & 0.99 \\ 
  40 &  &  &  & 0.94 & 1.00 & 1.00 & 1.00 & 1.00 & 1.00 \\ 
  80 &  &  &  & 0.00 & 1.00 & 1.00 & 1.00 & 1.00 & 1.00 \\ 
   \hline
\end{tabular}
\caption{$b_g^0 = e_1+z_g e_2$, $g=1,\ldots,G=p$, $z_g \sim \mathcal{N}(0,1)$
  independent. The assumptions are violated, yielding
  too conservative confidence intervals. The $0.00$ at $n=100$, $p=80$ is
  due to a large bias of $M_{\hat \Sigma}(\hat B)$ towards $0$. For larger
  $n$, however, this bias quickly vanishes and we get the desired coverage
  (starting at approximately $n =120$).}\label{biaseffect}
\end{table}

\begin{table}[H]
\centering
\begin{tabular}{rrrrrrrrrr}
  \hline
 & $n= 5$ & 10 & 15  & 100 & 200 & 500 & 1000 & 2000 & 4000\\ 
  \hline
$p=3$ & 0.76 & 0.87 & 0.90 & 0.99 & 0.99 & 0.99 & 1.00 & 1.00 & 1.00 \\ 
  5 &  & 0.65 & 0.78 & 1.00 & 1.00 & 1.00 & 1.00 & 1.00 & 1.00 \\ 
  10 &  &  & 0.33 & 1.00 & 1.00 & 1.00 & 1.00 & 1.00 & 1.00 \\ 
  15 &  &  &  & 0.99 & 1.00 & 1.00 & 1.00 & 1.00 & 1.00 \\ 
  20 &  &  &  & 0.99 & 1.00 & 1.00 & 1.00 & 1.00 & 1.00 \\ 
  40 &  &  &  & 0.93 & 1.00 & 1.00 & 1.00 & 1.00 & 1.00 \\ 
  80 &  &  &  & 0.00 & 1.00 & 1.00 & 1.00 & 1.00 & 1.00 \\
   \hline
\end{tabular}
\caption{$b_g^0 =  e_1$,
  $g=1,\ldots,G=[0.8p]$. The assumptions are again violated
  and coverage is too high.  At $p=80$ and $n=100$ we observe the same
  effect as in Table \ref{biaseffect}.
  In this scenario the estimated
  confidence regions can become arbitrarily large. This stems from the fact
that if some of the  $\hat b_g$ corresponding to $A(\hat B, \hat \Sigma)$
are very  close, the estimated variance of magging may become
large. In this setting a different approach, for example as discussed in Section~\ref{relaxation} makes more sense.}\label{violated}
\end{table}

\begin{table}[H]
\centering
\begin{tabular}{rrrrrrrrrr}
  \hline
 & $n= 5$ & 10 & 15  & 100 & 200 & 500 & 1000 & 2000 & 4000\\ 
  \hline
$p=3$ & 0.71 & 0.77 & 0.84 & 0.92 & 0.94 & 0.96 & 0.95 & 0.94 & 0.93 \\ 
  5 & 0.74 & 0.69 & 0.76 & 0.90 & 0.94 & 0.94 & 0.95 & 0.95 & 0.95 \\ 
  10 & 0.55 & 0.70 & 0.60 & 0.86 & 0.88 & 0.93 & 0.94 & 0.94 & 0.95 \\ 
  15 & 0.52 & 0.53 & 0.70 & 0.77 & 0.86 & 0.91 & 0.94 & 0.95 & 0.95 \\ 
  20 & 0.53 & 0.48 & 0.52 & 0.73 & 0.81 & 0.89 & 0.93 & 0.92 & 0.94 \\ 
  40 & 0.40 & 0.47 & 0.37 & 0.52 & 0.62 & 0.81 & 0.87 & 0.90 & 0.94 \\ 
  80 & 0.20 & 0.40 & 0.37 & 0.56 & 0.38 & 0.52 & 0.72 & 0.84 & 0.90 \\ 
   \hline
\end{tabular}
\caption{$b_g^0 =  e_g$, $g=1,\ldots,G=p$. The diagonal elements of $\hat
  \Sigma$ and $\hat \Sigma_g$ where increased by a value $10^{-4}$ in order to make
 them invertible and not too ill-conditioned for $n \le p $. Again, coverage is approximately correct for $n$ sufficiently large.}\label{ridge}
\end{table}

\begin{table}[H]
\centering
\begin{tabular}{rrrrrrrrrr}
  \hline
 & $n= 5$ & 10 & 15  & 100 & 200 & 500 & 1000 & 2000 & 4000\\ 
  \hline
$p=3$ & 41.70 & 2.97 & 1.59 & 0.59 & 0.53 & 0.49 & 0.47 & 0.47 & 0.46 \\ 
  5 & 831.50 & 13.52 & 4.83 & 0.42 & 0.34 & 0.30 & 0.28 & 0.26 & 0.26 \\ 
  10 & 6.56 & 1935.77 & 27.78 & 0.29 & 0.20 & 0.16 & 0.14 & 0.13 & 0.12 \\ 
  15 & 0.29 & 19.83 & 3844.87 & 0.26 & 0.16 & 0.12 & 0.10 & 0.09 & 0.08 \\ 
  20 & 0.08 & 4.25 & 41.04 & 0.29 & 0.15 & 0.09 & 0.08 & 0.07 & 0.06 \\ 
  40 & 0.01 & 0.04 & 4.61 & 2.71 & 0.16 & 0.07 & 0.05 & 0.04 & 0.03 \\ 
  80 & 0.00 & 0.00 & 0.01 & 205.85 & 1.09 & 0.06 & 0.03 & 0.02 & 0.02 \\
   \hline
\end{tabular}
\caption{This table shows the average maximum eigenvalues of the estimated covariance matrix of $\sqrt{n}(M_{\Sigma^0}(B^0)-M_{\hat \Sigma}(\hat B))$, analogous to Table \ref{ridge}.}\label{ridge2}
\end{table}

In Table~\ref{biaseffect} and Table~\ref{violated} we explore the
violation of one of the assumptions in Theorem~\ref{coverage}. The maximin
effect is $M_{\Sigma^0}(B^0)=(1,0,0\ldots)$, and the convex combination
$M_{\Sigma^0}(B^0) = \sum_{g=1}^G \alpha_g b_g^0$ with $\alpha_g \ge 0$, $\sum
\alpha_g =1$ is not unique. In both cases, this seems to lead to too
conservative confidence regions. Generally, in these settings the
difficulty arises from the fact that the derivative of $M_{\Sigma}(B)$
does not exist at $M_{\Sigma^0}(B^0)$. As a result, the fluctuations of
$\lim_n \sqrt{n}(M_{\Sigma^0}(B^0)-M_{\hat \Sigma}(\hat B))$  - provided
that this limit exists - are not necessarily
Gaussian anymore.

In the last simulation, depicted in Table~\ref{ridge} the $\hat b_g$,
$g=1,\ldots,G$ were not calculated by ordinary
least squares but ridge regression. The diagonal elements of $\hat
  \Sigma$ and $\hat \Sigma_g$ where increased by a value $10^{-4}$ in order
  to make them invertible and not too ill-conditioned for $n \le p $. Apart from that we used the same
setting as in Table~\ref{tablecoverage}. As in Table~\ref{tablecoverage}, for large $n$ the coverage seems to be
(approximately) correct but severe undercoverage can still occur for
$n\ll p$. In these high-dimensional settings, the tuning ridge parameter
would need to be better adjusted for a useful balance between bias and
variance and the bias of the ridge penalty would have to be
adjusted for, something which is beyond the current scope. In Table~\ref{ridge2} the
corresponding maximum eigenvalues of the estimated variance of
$\sqrt{n}(M_{\Sigma^0}(B^0)-M_{\hat \Sigma}(\hat B))$ were plotted, each entry being the
average over all
1000 runs. We observe a spike for $p=n$. This peaking is similar to a
related effect in ridge and lasso regression. Specifically, for fixed $p$ and varying $n$, the
norm of the regression estimate is growing as $n$ is increased,  reaching its peak at approximately
$p=n$ while then decreasing again as the solution converges towards
the true parameter as $n$ grows very large.

\section{Discussion}\label{sec:discussion}

 We derived the asymptotic distribution of the magging estimator and proposed
asymptotically tight and valid confidence regions for the maximin
effect. The corresponding theorems requires a rather weak assumption on
the true regression coefficients $b_1^0,\ldots,b_G^0$. However, if this
assumption is not satisfied, as studied in simulations, the resulting
confidence regions seem to become too conservative. Especially when all of the
``active'' vectors $\{ \hat b_g$ : $g \in A(\hat \Sigma, \hat B) \}$ are very
close to each other, the proposed confidence regions tend to become large. Furthermore,
in this scenario the magging estimator may suffer from a large bias. Then it may make more sense to use an approach based on
relaxation. Such an approach is outlined in the appendix in
Section~\ref{relaxation} and it would also allow for non-asymptotic
confidence intervals at the price of coverage probabilities well above
the specified level. The proposed asymptotic confidence interval on
the other hand is arguably more intuitive and yields in most scenarios
tight bounds for
large sample sizes.

\bibliographystyle{apalike} 
\bibliography{references}

\section{Appendix}

The structure is as follows: The first part is devoted to the most
important definitions and explicit formulas which were omitted in the main
section of the paper. The second part contains the proof of
Theorem~\ref{maintheorem} and several lemmata. The third part
contains the proof of Theorem~\ref{coverage}. Finally, the last part
contains a relaxation-based idea to construct confidence intervals for
maximin effects.

\subsection{Definitions and formulas}\label{defs}

\begin{definition}{$A(B,\Sigma)$}

The set $A(B,\Sigma) \subset \{ 1,\ldots,G\}$ denotes indices $g$ for which $b_g$ has nonvanishing
coefficient $\alpha_g$ in one of the convex combinations
$M_\Sigma(B)=\sum_{g=1,\ldots,G} \alpha_g b_g$ with $\alpha_g \ge 0$,
$\sum_{g=1,\ldots,G} \alpha_g =1$. Note that by the assumptions of Theorem \ref{coverage}
or Theorem \ref{maintheorem} the $\alpha_g$ are unique for $M_{\Sigma^0}(B^0)$.

\end{definition}

\begin{definition}{$W( B, \Sigma)$} 

$W(\hat B,\hat \Sigma)$ is a consistent estimator of the variance of $\lim_n \sqrt{n} (M_{\hat \Sigma}(\hat B)- M_{ \Sigma^0}( B^0) )$, see proof of Theorem \ref{maintheorem}.

\begin{equation*}
W( B, \Sigma) = \sigma^2 \sum_{g \in A( B, \Sigma)} \Dif_g^t M_{ \Sigma}( B_{A( B, \Sigma)}) { \Sigma}^{-1} \Dif_g M_{\Sigma }( B_{A( B,\Sigma)})+V( B_{A( B, \Sigma)}, \Sigma) 
\end{equation*}
Definitions and explicit formulas of these terms can be found below. We estimate $\Sigma^0$ by $\hat \Sigma = \frac{1}{nG} \bold{X}^t \bold X
 $. $\Dif_g^t M_\Sigma(B)$ denotes the derivative of $M_\Sigma(B)$ with
 respect to $b_g$.

\end{definition}
\paragraph{Explicit formula for $V ( \hat B_{A(\hat B,\hat \Sigma)},\hat
  \Sigma)$.} (Compare with Lemma~\ref{matcalc})

Consistent estimator of the additional variance of $\lim_n \sqrt{n}
(M_{\hat \Sigma}(\hat B)- M_{ \Sigma^0}( B^0) )$ ``caused''  by not knowing $\Sigma^0$, see proof of Theorem \ref{maintheorem} and Lemma \ref{matcalc}.

\begin{equation*}
 V ( \hat B_{A(\hat B,\hat \Sigma)},\hat \Sigma) =  \hat D ( \hat D ^t \hat \Sigma \hat D )^{-1} \hat D^t \hat C \hat D (\hat D^t \hat \Sigma \hat D )^{-1} \hat D^t,
\end{equation*}
where  $\hat{C}$ is the empirical covariance matrix of the $p$-dimensional
vectors $\frac{1}{\sqrt{G}}\bold{X}_{k \cdot}^t \bold{X}_{k \cdot}M_{\hat \Sigma}(\hat B)$, $k=1,\ldots,(nG)$.
Furthermore, with $\tilde B = \hat B_{A(\hat B,\hat \Sigma)}$, $G' = | A(\hat B,\hat \Sigma) |$:

\begin{equation*}
\hat D := (\tilde b_2,\ldots,\tilde b_{G'}) - (\tilde b_1,\ldots,\tilde b_1).
\end{equation*}

\paragraph{Explicit formula for  $\Dif_g M_{\hat \Sigma}(\hat B_{A(\hat
    \Sigma, \hat B)})$.}(Compare with Lemma \ref{diff1})

Let us again write $\tilde B = \hat B_{A(\hat B,\hat \Sigma)}$, $G' = | A(\hat B,\hat \Sigma) |$,
\begin{align*}
 \Dif_g M_{\hat \Sigma}(\hat B_{A(\hat \Sigma, \hat B)}) = &-  \frac{\| M_{\hat \Sigma}(\tilde B) \|_{\hat \Sigma}}{\| (\text{Id}-\hat{\text{PA}}^{(g)})\tilde b_g \|_{\hat \Sigma}} 
 \frac{  (\text{Id}-\hat{\text{PA}}^{(g)})\tilde b_g}{\| (\text{Id}-\hat{\text{PA}}^{(g)})\tilde b_g \|_{\hat \Sigma}} \frac{M_{\hat \Sigma}(\tilde B)^t}{\| M_{\hat \Sigma}(\tilde B) \|_{\hat \Sigma}}\hat \Sigma  \nonumber \\
 &+\frac{\| (\text{Id}- \hat{\text{PA}}^{(g)})M_{\hat \Sigma}(\tilde B) \|_{\hat \Sigma}}{\| (\text{Id}-\hat{\text{PA}}^{(g)})\tilde b_g \|_{\hat \Sigma}} \hat \Pi_{\tilde B}. 
\end{align*}
Here, $\hat{\text{PA}}^{(g)}$ denotes the affine projection on the smallest affine space containing $\tilde b_1,\ldots,\tilde b_{g-1},\tilde b_{g+1},\ldots,\tilde b_{G'}$. Let $\Pi_{\tilde B} \in \mathbb{R}^{p \times p}$ denote the projection on $\langle  \tilde b_2- \tilde b_1,\ldots, \tilde b_{G'}-\tilde b_1 \rangle^\perp$. These geometric definitions are meant with respect to the scalar product $\langle x,y \rangle_{\hat \Sigma} = x^t \hat \Sigma y$.

\subsection{Proof of Theorem \ref{maintheorem}}

\begin{proof}
 The proof is based on the delta method. As $\hat B \rightharpoonup B^0$ and
 $\hat \Sigma \rightharpoonup \Sigma^0$, by Lemma~\ref{continuity}, $A(B^0,\Sigma^0)$ = $A(\hat B,\hat \Sigma)$ up to an
 asymptotically negligible set. Hence $ M_{\Sigma^0}(B^0) =
 M_{\Sigma^0}(B_{A(B^0,\Sigma^0)}^0)$ and $ M_{\hat \Sigma}(\hat B) = M_{\hat
   \Sigma}(\hat B_{A(B^0,\Sigma^0)}) $ up to an asymptotically negligible
 set. So without loss of generality let us assume (without changing the
 definition of $\hat \Sigma$) that $ A(B^0,\Sigma^0)= A(\hat B,\hat \Sigma)=
 \{1,\ldots,G\} $, and hence $ B^0=B_{ A(B^0,\Sigma^0)}^0 $, $\hat  B=\hat B_{ A(\hat B,\hat \Sigma)} $. By Lemma \ref{diff1} and Lemma \ref{diff2}, $M_{\Sigma}(B)$ is continuously differentiable in a neighborhood of $B^0$ and $\Sigma^0$. Using Taylor in a neighborhood of $B^0$ and $\Sigma^0$ we can write
 
 \begin{align*}
  \sqrt{n} \left(M_{\hat \Sigma}(\hat B)-M_{\Sigma^0}(B^0)  \right) =& \Dif_{B }M_{ \Xi }(\xi) \sqrt{n}(   \hat B-B^0 )\\
  &+ \Dif_{\Sigma }M_{ \Xi }(\xi) \sqrt{n}(   \hat \Sigma-\Sigma^0 )+\scriptstyle\mathcal{O} \textstyle_\mathbb{P} (1)  \\
  =& (\Dif_B M_\Xi(\xi)-\Dif_B M_{\Sigma^0}( B^0)) \sqrt{n}(  \hat B - B^0 ) \\
  &+(\Dif_\Sigma M_\Xi(\xi)-\Dif_\Sigma M_{\Sigma^0}( B^0)) \sqrt{n}(  \hat \Sigma - \Sigma^0 ) \\
  &+\Dif_B M_{\Sigma^0}( B^0)\sqrt{n}(  \hat B- B^0 ) \\
  &+\Dif_\Sigma M_{\Sigma^0}( B^0)\sqrt{n}(  \hat \Sigma- \Sigma^0 ) +\scriptstyle\mathcal{O} \textstyle_\mathbb{P} (1),
  \end{align*}
with  $\xi = \gamma B^0+(1-\gamma) \hat B$ and $\Xi = \gamma
\Sigma^0+(1-\gamma) \hat \Sigma $ for some random variable $\gamma \in [0,1]$. 
We now want to show that the first and second term are negligible, and calculate the asymptotic Gaussian distributions of the last two terms. Furthermore we want to show that the last two terms are asymptotically independent. This guarantees that the variance of $ \lim_n \sqrt{n} \left(M_{\hat \Sigma}(\hat B)-M_{\Sigma^0}( B^0)  \right)$ is the sum of the variances of the two asymptotic Gaussian distributions.

Hence, to prove \eqref{toprove} it suffices to show: 
 \begin{enumerate}[(1)]
 \item $ \Dif_B M_\Xi(\xi)-\Dif_B M_{\Sigma^0}( B^0)= \scriptstyle\mathcal{O} \textstyle_\mathbb{P} (1)$
  \item $ \Dif_\Sigma M_\Xi(\xi)-\Dif_\Sigma M_{\Sigma^0}( B^0)= \scriptstyle\mathcal{O} \textstyle_\mathbb{P} (1)$
  \item $  \sqrt{n}(  \hat b_g - b_g^0 ) \rightharpoonup \mathcal{N}(0, \sigma^2 (\Sigma^0)^{-1}) $ for $g=1,\ldots,G$.
  \item $ \Dif_B M_{\Sigma^0}( B^0) \sqrt{n}(  \hat B - B^0 )  \rightharpoonup \mathcal{N} \left( 0, \sigma^2 \sum_{g \in A(B^0,\Sigma^0)} \Dif_g^t M_{\Sigma^0}(B^0) (\Sigma^0)^{-1} \Dif_g M_{\Sigma^0}(B^0) \right) $
  \item $\Dif_\Sigma M_{\Sigma^0}( B^0) \sqrt{n}(  \hat \Sigma - \Sigma^0 )\rightharpoonup \mathcal{N}(0,V(B^0,\Sigma^0)) $
  \item For $\delta_n := \sqrt{n} ( \hat B -B^0)$ and $ \Delta_n := \sqrt{n}(\hat \Sigma-\Sigma^0)$ we have $(\delta_n,\Delta_n)\rightharpoonup (\delta,\Delta)$ with $\delta_g$, $g=1,\ldots,G$ and $\Delta$ independent. 
 \end{enumerate}
 
\textbf{Part (1) and (2):} By Lemma \ref{diff1} and Lemma \ref{diff2} the 
 derivatives are continuous at $B^0$ and $\Sigma^0$ and $\hat \Sigma \rightarrow \Sigma^0$, $\hat B \rightarrow B^0$ in probability (which implies $\xi \rightarrow B^0$ and $\Xi \rightarrow \Sigma^0$ in probability).
 
\textbf{Part (3):} This is immediate, as under the chosen model, conditioned on $\bold{X}$,
\begin{equation*}
 \hat b_g \sim \mathcal{N}(b_g, \sigma^2(\bold{X}_g^t \bold{X}_g)^{-1})
\end{equation*}

and $\frac{1}{n} \bold{X}_g^t \bold{X}_g \rightarrow \Sigma$ in probability.

\textbf{Part (4):} Part (3) and a linear transformation.

\textbf{Part (5):} We defer this part to Lemma \ref{matcalc}.

\textbf{Part (6):} We saw the convergence of $\delta_n$ in part (3). The convergence of $\Delta_n$ is deferred to Lemma \ref{clt}. In the following we use the notation $\delta = (\delta_1,\ldots,\delta_G)$ and $\delta_n = (\delta_{n,1},\ldots,\delta_{n,G})$. For the asymptotic independence of part (6). we have to show that for any bounded continuous function $g$,
\begin{equation*}
\mathbb{E} g(\delta_n,\Delta_n) \rightarrow \int \int g(\delta,\Delta) \frac{(\text{det} \Sigma^0)^{G/2}}{(2 \pi \sigma^2)^{G/2}} \prod_{g=1}^{G} \exp \left(  -\delta_g^t \frac{\Sigma^0}{2\sigma^2} \delta_g \right) d \delta_1 \cdots d \delta_G \mathbb{P}[d \Delta].
\end{equation*}
In the following equation the inner integral is bounded by $2$, and for $n \rightarrow \infty$,
$\frac{1}{n} \bold{X}_g^t \bold{X}_g \rightarrow \Sigma^0$ in
probability. Hence, by dominated convergence on the inner and outer integral,
\begin{align*}
&\int \int | \prod_{g=1}^G \frac{\sqrt{ \text{det} \frac{1}{n} \bold{X}_g^t \bold{X}_g}}{(2 \pi \sigma^2)^{1/2}}  \exp \left(- \delta_{n,g}^t \frac{\bold{X}_g^t \bold{X}_g}{2n\sigma^2}  \delta_{n,g} \right)\\
 &- \prod_{g=1}^G \frac{\sqrt{ \text{det} \Sigma^0} }{(2 \pi \sigma^2)^{1/2}}  \exp \left(- \delta_{n,g}^t \frac{\Sigma^0}{2\sigma^2} \delta_{n,g} \right)| d \delta_{n,1} \cdots d \delta_{n,G} \mathbb{P}[d \Delta_n] \rightarrow 0.
\end{align*}
Using this,
\begin{equation*}\label{asymind}
\limsup_{n \rightarrow \infty} | \mathbb{E} g(\delta_n,\Delta_n)-g(\delta,\Delta_n) | = 0,
\end{equation*}
where $\delta$ is independent of $\Delta_n$, $\delta_g \sim
\mathcal{N}(0,\sigma^2 (\Sigma^0)^{-1})$ i.i.d.. Finally, with $\Delta$ independent of $\delta$, $\Delta \sim \lim_n \sqrt{n} (\hat \Sigma - \Sigma^0)$,
\begin{align*}
&\limsup_{n \rightarrow \infty} | \mathbb{E} g(\delta_n,\Delta_n) -\mathbb{E} g(\delta,\Delta)| \\
&=  \limsup_{n \rightarrow \infty} | \mathbb{E} g(\delta,\Delta_n)-\mathbb{E} g(\delta,\Delta) | \\
&=\limsup_{n \rightarrow \infty} |  \int \mathbb{E} [ (g(\delta,\Delta_n)-g(\delta,\Delta))| \delta ]\frac{(\text{det} \Sigma^0 )^{G/2}}{(2 \pi \sigma^2)^{G/2}}   \prod_{g=1}^G \exp \left(- \delta_{g}^t \frac{\Sigma^0}{2\sigma^2} \delta_{g} \right) d \delta_{1} \cdots d \delta_{G} | \\
&=0.
\end{align*}
In the second line we used equation \eqref{asymind}, in the last line we used dominated convergence and $\Delta_n \rightharpoonup \Delta$. This concludes the proof.
\end{proof}
Let $\Sigma \in \mathbb{R}^{p \times p}$ be symmetric positive definite. In the following, we work in the Hilbert space $(\mathbb{R}^p, \langle \cdot,\cdot \rangle_\Sigma)$, where for $x,y \in \mathbb{R}^p$,
\begin{equation*}
 \langle x,y \rangle_\Sigma := x^t \Sigma y,
\end{equation*}
and induced norm
\begin{equation*}
 \|x \|_\Sigma = \sqrt{ x^t \Sigma x}.
\end{equation*}
This means that projections and orthogonality etc. are always meant with respect to this space.
Let $\text{PA}$ denote the affine projection on the smallest affine space containing $b_1,\ldots,b_G$. Let $\text{PA}^{(g)}$ denote the affine projection on the smallest affine space containing $b_1,\ldots,b_{g-1},b_{g+1},\ldots,b_G$. Note that for $g=1$ this space can be expressed as $b_2 + \langle b_3-b_2,\ldots, b_G-b_2 \rangle$. Let $\Pi_B \in \mathbb{R}^{p \times p}$ denote the projection on $\langle b_2-b_1,\ldots, b_G-b_1 \rangle^\perp$.
\begin{lemma}\label{diff1}
 If $M_\Sigma(B)= \alpha_1 b_1+\ldots+\alpha_G b_G$ with $0 < \alpha_g < 1
 $ for $g =1,\ldots,G > 1$ and this representation is unique
 (i.e. $B=(b_1,...,b_G)$ has full rank), then $M_\Sigma$ is continuously differentiable in a neighborhood of $B$ with
\begin{align}\label{lemma}
 \Dif_{g,v} M_\Sigma(B) = &- \frac{\| M_\Sigma(B) \|_\Sigma}{\| (\text{Id}-\text{PA}^{(g)})b_g \|_\Sigma}    \langle \frac{M_\Sigma(B)}{\| M_\Sigma(B) \|_\Sigma},v \rangle_\Sigma \frac{(\text{Id}-\text{PA}^{(g)})b_g}{\| (\text{Id}-\text{PA}^{(g)})b_g \|_\Sigma} \nonumber \\
 &+\frac{\| (\text{Id}- \text{PA}^{(g)})M_\Sigma(B) \|_\Sigma}{\| (\text{Id}-\text{PA}^{(g)})b_g \|_\Sigma} \Pi_B v. 
\end{align}
Here, $ \Dif_{g,v} M_\Sigma(B)$ denotes the differential with respect to the variable $b_g$ in direction $v$.
\end{lemma}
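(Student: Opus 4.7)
The plan is to reduce the constrained minimization defining $M_\Sigma(B)$ to an unconstrained projection problem on an affine subspace, for which one can differentiate explicitly. Because all $\alpha_g$ lie strictly in $(0,1)$, the minimizer $m := M_\Sigma(B)$ lies in the relative interior of $\text{CVX}(B)$, so the KKT conditions collapse to: $m \in \mathcal{A}$ and $m \perp_\Sigma V$, where $V := \langle b_2-b_1,\ldots,b_G-b_1\rangle$ and $\mathcal{A} := b_1 + V$. Since $b \mapsto b^t \Sigma b$ is strictly convex on $\mathcal{A}$, these two conditions uniquely characterize $m$ as the $\Sigma$-orthogonal projection of $0$ onto $\mathcal{A}$. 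The full-rank assumption makes $V$ a $(G{-}1)$-dimensional subspace varying smoothly with $B$, and strict positivity of the $\alpha_g$ persists in an open neighborhood of $B$ (by continuity of the linear system determining them), so the projection characterization remains valid there. Equivalently, $m = \Pi_B b_j$ for every $j$.

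The key algebraic device is the orthogonal decomposition $V = V^{(g)} \oplus_\Sigma \mathbb{R} w_g$, where $V^{(g)}$ is the direction space of the affine hull $\mathcal{A}^{(g)}$ of $\{b_j : j \neq g\}$ and $w_g := (\text{Id}-\text{PA}^{(g)})b_g$ is the $\Sigma$-orthogonal residual of $b_g$ off $\mathcal{A}^{(g)}$. This yields
\begin{equation*}
\Pi_B \;=\; \Pi^{(g)} \;-\; \frac{w_g w_g^t \Sigma}{\|w_g\|_\Sigma^2},
\end{equation*}
where $\Pi^{(g)}$ denotes the $\Sigma$-orthogonal projection onto $(V^{(g)})^\perp$. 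The crucial feature is that $\text{PA}^{(g)}$, $V^{(g)}$, and $\Pi^{(g)}$ do not involve $b_g$; moreover full rank of $B$ forces $w_g \neq 0$, so the expression is well defined throughout a neighborhood.

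To compute $\Dif_{g,v} m$, pick any $j \neq g$ (possible since $G > 1$) and use $m = \Pi_B b_j$, so that $\Dif_{g,v} m = (\Dif_{g,v}\Pi_B)\, b_j$. Since $\text{PA}^{(g)}$ is affine with linear part $\text{Id}-\Pi^{(g)}$, one reads off $\Dif_{g,v} w_g = \Pi^{(g)} v$ and $\Dif_{g,v}\|w_g\|_\Sigma^2 = 2\langle w_g, v\rangle_\Sigma$. Inserting these into the product-rule expansion of the rank-one term and applying to $b_j$ yields an expression that simplifies via two identities: (i) $\Pi^{(g)} v - \frac{\langle w_g,v\rangle_\Sigma}{\|w_g\|_\Sigma^2} w_g = \Pi_B v$, immediate from the direct-sum decomposition; and (ii) $(\text{Id}-\text{PA}^{(g)}) m = \alpha_g w_g$, which follows from the barycentric representation $m = \sum_i \alpha_i b_i$ by applying the affine map $\text{PA}^{(g)}$ (which fixes $b_i$ for $i \neq g$ and sends $b_g \mapsto b_g - w_g$, while $\sum \alpha_i = 1$). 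Identity (ii) gives $\alpha_g = \|(\text{Id}-\text{PA}^{(g)})m\|_\Sigma / \|w_g\|_\Sigma$, precisely the scalar coefficient of $\Pi_B v$ in \eqref{lemma}; collecting the remaining contribution yields $-\langle m, v\rangle_\Sigma\, w_g / \|w_g\|_\Sigma^2$, matching the first line of \eqref{lemma}. Continuous differentiability is then immediate, since $w_g$, $\|w_g\|_\Sigma$, and $\Pi^{(g)}$ depend continuously on $B$ in a neighborhood where $\|w_g\|_\Sigma$ stays bounded away from zero. The main obstacle is bookkeeping the various terms in the rank-one derivative and recognizing the barycentric weight $\alpha_g$ as the geometric ratio of heights of $m$ and $b_g$ above $\mathcal{A}^{(g)}$ that appears in the stated formula.
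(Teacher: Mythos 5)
Your proof is correct, and it takes a genuinely different route from the paper's. Both arguments rest on the same geometric foundation: because all barycentric coordinates lie strictly in $(0,1)$, the magging point is the $\Sigma$-orthogonal projection of the origin onto the affine hull, and this characterization persists locally since the coordinates $(B^tB)^{-1}B^tM_\Sigma(B)$ vary continuously. From there the paper perturbs $b_1$ by $\Delta_1+\Delta_2$ (split orthogonally relative to $V=\langle b_2-b_1,\ldots,b_G-b_1\rangle$), writes down an explicit closed-form candidate for $M_\Sigma(\tilde B)$ in terms of $\text{PA}^{(1)}$, verifies the three membership/orthogonality conditions, and then extracts the derivative by a first-order expansion that requires some delicate bookkeeping (e.g.\ the linear-dependence argument for $(\text{Id}-\text{PA}^{(1)})(b_1+\Delta_2)$ and $(\text{Id}-\text{PA}^{(1)})b_1$). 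You instead encode everything in the operator identity $M_\Sigma(B)=\Pi_B b_j$ for $j\neq g$, isolate the entire $b_g$-dependence in the rank-one term of $\Pi_B=\Pi^{(g)}-w_gw_g^t\Sigma/\|w_g\|_\Sigma^2$, and differentiate by the product rule. I checked the two simplifying identities: (i) is immediate from the orthogonal splitting $V=V^{(g)}\oplus_\Sigma\mathbb{R}w_g$, and (ii) follows since $\text{PA}^{(g)}$ commutes with affine combinations, giving $(\text{Id}-\text{PA}^{(g)})M_\Sigma(B)=\alpha_g w_g$; moreover the coefficient $-\langle w_g,b_j\rangle_\Sigma/\|w_g\|_\Sigma^2$ that your computation produces does equal $\alpha_g$ (use $\langle w_g,M_\Sigma(B)\rangle_\Sigma=0$ together with $\langle w_g,b_i\rangle_\Sigma$ being constant over $i\neq g$), so the two lines of \eqref{lemma} come out exactly. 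What your approach buys is a cleaner derivation that makes the geometric meaning of the second coefficient transparent ($\alpha_g$ as the ratio of heights of $M_\Sigma(B)$ and $b_g$ above $\mathcal{A}^{(g)}$) and reduces continuity of the derivative to continuity of $w_g$ and $\Pi^{(g)}$; what the paper's approach buys is an explicit local formula \eqref{almost} for the perturbed magging point itself, which is reused conceptually elsewhere (e.g.\ in Lemma~\ref{continuity}).
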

\begin{remark}
In the proof of Theorem \ref{maintheorem}, we could assume that without loss of generality $\{1,\ldots,G \}=A(B,\Sigma)$, i.e.\ $B = B_{A(B,\Sigma)}$. We saw that in a neighborhood of $B$ and $\Sigma$, magging depends only on $B_{A(B,\Sigma)}$. Hence, for using the formula of $\Dif_g M_\Sigma(B)$ in the context of Theorem \ref{coverage} and \ref{maintheorem}, replace in the definition $B$ by $B_{A(B,\Sigma)}$. The derivatives with respect to $b_g$, $g \in \{1,\ldots,G \} - A(B,\Sigma)$ are zero.
\end{remark}
\begin{proof}
Without loss of generality, let us assume that $g = 1$. We will show that the partial derivatives exist and are continuous.

Let $ \Delta_1 \in \langle b_2-b_1,\ldots,b_G-b_1 \rangle^\perp$ and $\Delta_2 \in  \langle b_2-b_1,\ldots,b_G-b_1 \rangle$ and define $\tilde B := (b_1+\Delta_1+\Delta_2,b_2,\ldots,b_G)$. First, we want to show that, if $\| \Delta_1+\Delta_2 \|_\Sigma$ small,
\begin{equation}\label{almost}
M_\Sigma(\tilde B) = \text{PA}^{(1)}M_\Sigma(B)- \frac{\langle \text{PA}^{(1)}M_\Sigma(B) ,(\text{Id}-\text{PA}^{(1)})  \tilde b_1 \rangle_\Sigma}{\| (\text{Id}-\text{PA}^{(1)}) \tilde b_1\|_\Sigma^2} (\text{Id}-\text{PA}^{(1)}) \tilde b_1.
\end{equation}
Let us denote the r.h.s. by $\xi(\tilde B)$. We have to show:
\begin{enumerate}
 \item $\xi(\tilde B) \perp (\text{Id}-\text{PA}^{(1)}) \tilde b_1 $
 \item  $\xi(\tilde B) \perp \langle b_3-b_2,\ldots.,b_G-b_2 \rangle$
 \item $\xi(\tilde B) \in \text{CVX}(\tilde B)$, the convex hull generated
   by the columns of $\tilde B$.
\end{enumerate}
Note that 1. and 2. guarantee that the r.h.s. in \eqref{almost} is
perpendicular to the linear space generated by the columns of $\tilde B$.

1. is trivial. 2. By definition, $(\text{Id}-\text{PA}^{(1)}) \tilde b_1 \perp \langle b_3-b_2,\ldots.,b_G-b_2 \rangle$. $ \text{PA}^{(1)}M_\Sigma(B) \perp \langle b_3-b_2,\ldots.,b_G-b_2 \rangle$ as we can decompose into $\text{PA}^{(1)}M_\Sigma(B) = M_\Sigma(B) -(\text{Id}-\text{PA}^{(1)})M_\Sigma(B)$, which are both, by definition, perpendicular to $\langle b_3-b_2,\ldots.,b_G-b_2 \rangle$.

Now let us show 3.: $M_\Sigma(B) = \sum_{g=1}^G \alpha_g b_g $ for some $0 < \alpha_g $ and $\sum_{g=1}^G \alpha_g = 1$, i.e.\  $(B^t B)^{-1} B^t M_\Sigma(B) = (B_{ \restriction \langle b_1,\ldots,b_G \rangle})^{-1}M_\Sigma(B)=\alpha$.
Similarly, as $\xi(\tilde B )$ lies on the affine space generated by $\tilde b_1,\ldots,\tilde b_G$, we have $\xi(\tilde B ) = \sum_{g=1}^G \tilde \alpha_g \tilde b_g $ with $\sum_{g=1}^G \tilde \alpha_g = 1$. For small $\| \Delta_1+\Delta_2 \|_\Sigma$, $\tilde B$ has full rank and as $\xi(\tilde B) \rightarrow M_\Sigma(B)$,
\begin{equation*}
\lim_{\Delta \rightarrow 0} (\tilde B_{ \restriction \langle \tilde b_1,\ldots,\tilde b_G \rangle})^{-1} \xi(\tilde B) = \lim_{\Delta \rightarrow 0} (\tilde B^t \tilde B)^{-1} \tilde B^t \xi(\tilde B) = \alpha.
\end{equation*}
Hence, for small $\| \Delta_1+\Delta_2 \|_\Sigma$, $\tilde \alpha_g > 0$ and $\sum_{g=1}^G \tilde \alpha_g = 1$, hence $ \xi(\tilde B)  \in \text{CVX}(\tilde B)$ and thus $M_\Sigma(\tilde B)=\xi(\tilde B)$. This concludes the proof of \eqref{almost}.

Note that, as $\Delta_1 \perp \langle b_2-b_1,\ldots,b_G-b_1 \rangle= \langle b_1-b_2,b_3-b_2,\ldots,b_G-b_2 \rangle$,
\begin{align}\label{first}
 (\text{Id}-\text{PA}^{(1)} )  \tilde b_1 &= \tilde b_1 - \argmin_{\gamma \in b_2+\langle b_3-b_2,\ldots,b_G-b_2 \rangle} \|\gamma-b_1-\Delta_1-\Delta_2 \|_\Sigma^2 \nonumber \\
&= \tilde b_1 -  \argmin_{\gamma \in b_2+\langle b_3-b_2,\ldots,b_G-b_2 \rangle} \|\gamma-b_1-\Delta_2\|_\Sigma^2+ \| \Delta_1 \|_\Sigma^2 \nonumber \\
&= \Delta_1 + (\text{Id}- \text{PA}^{(1)} )(b_1 + \Delta_2).
\end{align}
$ (\text{Id}- \text{PA}^{(1)} )(b_1 +
\Delta_2)$ and $ (\text{Id}- \text{PA}^{(1)} )
b_1$ are linearly dependent. To see this, observe that both lie in the one-dimensional space $\langle b_2-b_1,\ldots,b_G-b_1 \rangle \cap  \langle b_3-b_2,\ldots,b_G-b_2 \rangle^\perp$. This implies that
\begin{align}\label{fourth}
 &\frac{\langle \text{PA}^{(1)}M_\Sigma(B) ,(\text{Id}-\text{PA}^{(1)})  (b_1+\Delta_2)\rangle_\Sigma}{\| (\text{Id}-\text{PA}^{(1)}) (b_1+\Delta_2)\|_\Sigma^2} (\text{Id}-\text{PA}^{(1)}) (b_1+\Delta_2) \nonumber \\
 &=  \frac{\langle \text{PA}^{(1)}M_\Sigma(B) ,(\text{Id}-\text{PA}^{(1)})  b_1\rangle_\Sigma}{\| (\text{Id}-\text{PA}^{(1)}) b_1\|_\Sigma^2} (\text{Id}-\text{PA}^{(1)}) b_1
\end{align}
Now we can put these pieces together: In the first step we use
\eqref{almost} and \eqref{first}, in the second we use $\Delta_1 \in
\langle b_2-b_1,\ldots,b_G-b_1 \rangle^\perp$. 
\begin{align*}
 &M_\Sigma(\tilde B)\\
=&\text{PA}^{(1)}M_\Sigma(B)- \frac{\langle \text{PA}^{(1)}M_\Sigma(B) ,\Delta_1 + (\text{Id}- \text{PA}^{(1)} )(b_1 + \Delta_2)\rangle_\Sigma}{\| \Delta_1 + (\text{Id}- \text{PA}^{(1)} )(b_1 + \Delta_2)\|_\Sigma^2} (\Delta_1 + (\text{Id}- \text{PA}^{(1)} )(b_1 + \Delta_2)) \\ 
=& \text{PA}^{(1)}M_\Sigma(B)- \frac{\langle \text{PA}^{(1)}M_\Sigma(B)
  ,\Delta_1 + (\text{Id}- \text{PA}^{(1)} )(b_1 +
  \Delta_2)\rangle_\Sigma}{\| \Delta_1 \|^2 + \|(\text{Id}- \text{PA}^{(1)}
  )(b_1 + \Delta_2)\|_\Sigma^2} (\Delta_1 + (\text{Id}- \text{PA}^{(1)}
)(b_1 + \Delta_2)).
\end{align*}
In the first step we do an expansion of the equation above and in the second, we use \eqref{fourth} and $ (\text{Id}- \text{PA}^{(1)} )(b_1 + \Delta_2) = (\text{Id}- \text{PA}^{(1)} )b_1 + \mathcal{O} (\| \Delta_2 \|_\Sigma) $:
\begin{align*}
 &M_\Sigma(\tilde B)\\
=&  \text{PA}^{(1)}M_\Sigma(B)- \frac{\langle \text{PA}^{(1)}M_\Sigma(B) , (\text{Id}- \text{PA}^{(1)} )(b_1 + \Delta_2)\rangle_\Sigma}{ \|(\text{Id}- \text{PA}^{(1)} )(b_1 + \Delta_2)\|_\Sigma^2}  (\text{Id}- \text{PA}^{(1)} )(b_1 + \Delta_2)\\
&- \frac{\langle \text{PA}^{(1)}M_\Sigma(B) ,\Delta_1\rangle_\Sigma}{ \|(\text{Id}- \text{PA}^{(1)} )(b_1 + \Delta_2)\|_\Sigma^2} (\text{Id}- \text{PA}^{(1)} )(b_1 + \Delta_2) \\
&- \frac{\langle \text{PA}^{(1)}M_\Sigma(B) , (\text{Id}- \text{PA}^{(1)} )(b_1 + \Delta_2)\rangle_\Sigma}{\|(\text{Id}- \text{PA}^{(1)} )(b_1 + \Delta_2)\|_\Sigma^2} \Delta_1  + \mathcal{O}(\| \Delta_1\|_\Sigma^2 +\| \Delta_2 \|_\Sigma^2)) \\
=&  \text{PA}^{(1)}M_\Sigma(B)- \frac{\langle \text{PA}^{(1)}M_\Sigma(B) , (\text{Id}- \text{PA}^{(1)} )b_1 \rangle_\Sigma}{ \|(\text{Id}- \text{PA}^{(1)} )b_1 \|_\Sigma^2}  (\text{Id}- \text{PA}^{(1)} )b_1 \\
&- \frac{\langle \text{PA}^{(1)}M_\Sigma(B) ,\Delta_1\rangle_\Sigma}{ \|(\text{Id}- \text{PA}^{(1)} )b_1\|_\Sigma^2} (\text{Id}- \text{PA}^{(1)} )b_1 \\
&- \frac{\langle \text{PA}^{(1)}M_\Sigma(B) , (\text{Id}- \text{PA}^{(1)} )b_1\rangle_\Sigma}{\|(\text{Id}- \text{PA}^{(1)} )b_1\|_\Sigma^2} \Delta_1  + \mathcal{O}(\| \Delta_1\|_\Sigma^2 +\| \Delta_2 \|_\Sigma^2)).
\end{align*}
From this and \eqref{almost} we obtain
\begin{align*}
 &M_\Sigma(\tilde B) - M_\Sigma( B) \\
 =& - \frac{\langle \text{PA}^{(1)}M_\Sigma(B) ,\Delta_1\rangle_\Sigma}{ \|(\text{Id}- \text{PA}^{(1)} )b_1\|_\Sigma^2} (\text{Id}- \text{PA}^{(1)} )b_1 \\
&- \frac{\langle \text{PA}^{(1)}M_\Sigma(B) , (\text{Id}- \text{PA}^{(1)} )b_1\rangle_\Sigma}{\|(\text{Id}- \text{PA}^{(1)} )b_1\|_\Sigma^2} \Delta_1  + \mathcal{O}(\| \Delta_1\|_\Sigma^2 +\| \Delta_2 \|_\Sigma^2)).
\end{align*}
Now let us write $\Delta_1+\Delta_2 = \gamma v, \Delta_1 = \gamma  ( M_\Sigma(B) / \| M_\Sigma(B)  \|_\Sigma + \mu v_\perp)$ with $ v_\perp \perp M_\Sigma(B) $ and $v_\perp \perp \langle b_2 - b_1,\ldots,b_G-b_1 \rangle$. By noting that 
\begin{align*}
\langle \text{PA}^{(1)}M_\Sigma(B) ,  \Delta_1\rangle_\Sigma &= \langle M_\Sigma(B)+ (\text{PA}^{(1)}-\text{Id})M_\Sigma(B) , \gamma \frac{  M_\Sigma(B) }{ \| M_\Sigma(B)  \|_\Sigma} + \mu v_\perp)  \rangle_\Sigma \\
&= \gamma \| M_\Sigma(B)\|_\Sigma \\
&= \gamma \langle \frac{  M_\Sigma(B) }{ \| M_\Sigma(B)  \|_\Sigma},v \rangle_\Sigma \| M_\Sigma(B)\|_\Sigma,
\end{align*}
and, as $ (\text{Id}- \text{PA}^{(1)})M_\Sigma(B)$ and $
(\text{Id}-\text{PA}^{(1)})  b_1$ are linearly dependent (both lie in the one-dimensional space $\langle b_2-b_1,\ldots,b_G-b_1 \rangle \cap  \langle b_3-b_2,\ldots,b_G-b_2 \rangle^\perp$),
\begin{align*}
-\langle \text{PA}^{(1)}M_\Sigma(B) ,(\text{Id}-\text{PA}^{(1)})  b_1\rangle_\Sigma
&= \langle (\text{Id}- \text{PA}^{(1)})M_\Sigma(B) ,(\text{Id}-\text{PA}^{(1)})  b_1\rangle_\Sigma \\
&= \|( \text{Id}- \text{PA}^{(1)})M_\Sigma(B) \|_\Sigma \|(\text{Id}-\text{PA}^{(1)})  b_1\|_\Sigma.
\end{align*}
We obtain:
\begin{align*}
 &M_\Sigma(\tilde B) - M_\Sigma( B) \\
 =& - \gamma \frac{ \| M_\Sigma(B)\|_\Sigma }{ \|(\text{Id}- \text{PA}^{(1)} )b_1\|_\Sigma^2}  \langle \frac{  M_\Sigma(B) }{ \| M_\Sigma(B)  \|_\Sigma},v \rangle_\Sigma (\text{Id}- \text{PA}^{(1)} )b_1 \\
&- \gamma \frac{\|( \text{Id}- \text{PA}^{(1)})M_\Sigma(B) \|_\Sigma \|(\text{Id}-\text{PA}^{(1)})  b_1\|_\Sigma}{\|(\text{Id}- \text{PA}^{(1)} )b_1\|_\Sigma^2} \Pi_B v  + \mathcal{O}(\| \Delta_1\|_\Sigma^2 +\| \Delta_2 \|_\Sigma^2)).
\end{align*}
Hence the directional derivative exists and is equal to \eqref{lemma}. The assertion follows by existence and continuity of the directional derivatives in a neighborhood of $B$.
\end{proof}

\begin{lemma}\label{continuity}
Let $\Sigma^0$ be positive definite. $M_\Sigma(B)$ is continuous in $B$ and
$\Sigma$ in a neighborhood of $\Sigma^0$. Furthermore, under the assumptions of Theorem \ref{coverage} (or Theorem \ref{maintheorem}), in a neighborhood of $B^0$ and $\Sigma^0$, $A(B,\Sigma)$ is constant. 
\end{lemma}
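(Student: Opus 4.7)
The plan is to handle the two claims separately. For continuity of $M_\Sigma(B)$, the key observation is that positive definiteness of $\Sigma$ is an open condition, so the objective $b \mapsto b^t \Sigma b$ is strictly convex in a neighborhood of $\Sigma^0$, guaranteeing that $M_\Sigma(B)$ is the unique minimizer over the compact convex set $\text{CVX}(B)$. Since $\text{CVX}(B)$ depends Hausdorff-continuously on $B$, a standard compactness argument applies: along any sequence $(B_n, \Sigma_n) \to (B^*, \Sigma^*)$ with $\Sigma^*$ positive definite, the points $m_n := M_{\Sigma_n}(B_n)$ are bounded; every subsequential limit belongs to $\text{CVX}(B^*)$ and satisfies the variational inequality characterizing the minimizer of $b^t \Sigma^* b$ over $\text{CVX}(B^*)$, and by uniqueness all such limits agree with $M_{\Sigma^*}(B^*)$.

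For local constancy of $A(B, \Sigma)$, the main tool is the first-order optimality condition for $M := M_\Sigma(B)$: one has $\langle b_g - M, M \rangle_\Sigma \geq 0$ for every $g$, and in any representation $M = \sum_g \alpha_g b_g$ with $\alpha$ in the probability simplex, the coefficient $\alpha_g$ must vanish whenever the inequality is strict. Under the assumption of Theorem \ref{coverage}, the hyperplane hypothesis is equivalent to $\langle b_g^0 - M_{\Sigma^0}(B^0), M_{\Sigma^0}(B^0) \rangle_{\Sigma^0} > 0$ for every $g \notin A(B^0, \Sigma^0)$. By the continuity established in the first step (and joint continuity of the inner product in $(B, \Sigma)$), this strict inequality persists in a neighborhood of $(B^0, \Sigma^0)$. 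Hence for such $g$ the coefficient $\alpha_g$ must be zero in every representation, giving $g \notin A(B, \Sigma)$.

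The remaining direction $A(B^0, \Sigma^0) \subset A(B, \Sigma)$ I would prove by contradiction. Suppose some $g^* \in A(B^0, \Sigma^0)$ is missing from $A(B_n, \Sigma_n)$ along a sequence $(B_n, \Sigma_n) \to (B^0, \Sigma^0)$. Then $M_{\Sigma_n}(B_n)$ admits a representation $\sum_{g \neq g^*} \alpha_{n,g} b_{n,g}$ with coefficients in the probability simplex. By compactness one can extract a convergent subsequence of these coefficients, and combined with the continuity $M_{\Sigma_n}(B_n) \to M_{\Sigma^0}(B^0)$ and $b_{n,g} \to b_g^0$, this yields a convex combination of $\{b_g^0 : g \neq g^*\}$ equal to $M_{\Sigma^0}(B^0)$, contradicting the uniqueness hypothesis of Theorem \ref{coverage}. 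The main obstacle I anticipate is a careful translation between the supports of representations and the first-order complementarity relation; the strict-inequality assumption of the theorem is precisely what rules out degenerate perturbations on the inactive side, while the uniqueness assumption provides the rigidity needed on the active side.
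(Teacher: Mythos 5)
Your proposal is correct and follows essentially the same architecture as the paper's proof: continuity via uniqueness of the minimizer plus a limiting/compactness argument, the inclusion $A(B,\Sigma)\subset A(B^0,\Sigma^0)$ via the first-order perpendicularity condition $\langle b_g - M, M\rangle_{\Sigma}\ge 0$ together with the hyperplane assumption (the paper phrases this as $b_g^0\notin M_{\Sigma^0}(B^0)+M_{\Sigma^0}(B^0)^\perp$ persisting under perturbation), and the reverse inclusion from uniqueness of the convex representation. The only point of divergence is that last step, where the paper writes $\tilde\alpha=(B^tB)^{-1}B^tM_\Sigma(B)$ and uses continuity of this formula (exploiting that uniqueness forces $B^0_{A(B^0,\Sigma^0)}$ to have full rank) to conclude $\tilde\alpha\to\alpha\in(0,1)^G$, whereas you extract a convergent subsequence of simplex coefficients and derive a contradiction with uniqueness directly; both are valid and rest on the same hypothesis.
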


\begin{proof}
First, let us prove that magging is continuous. Proof by contradiction:
Assume there exist sequences $B_k \rightarrow B$, $\Sigma_k \rightarrow
\Sigma$ positive definite such that $M_{\Sigma_k}(B_k) \not \rightarrow
M_{\Sigma}(B)$. Without loss of generality, as $\Sigma$ is invertible, $M_{\Sigma_k}(B_k)$ converges, too. By definition of $M_{\Sigma_k}(B_k)$ we have
\begin{equation*}
 \| M_{\Sigma_k}(B_k) \|_{\Sigma_k} \le \| \Pi_{B_k} M_{\Sigma}(B) \|_{\Sigma_k},
\end{equation*}
where $\Pi_{B_k}$ denotes the projection (in $\langle \cdot , \cdot \rangle$) on the convex set $\text{CVX}(B_k)$. By continuity,
\begin{equation*}
 \| \lim_k M_{\Sigma_k}(B_k) \|_{\Sigma} \le \| M_{\Sigma}(B) \|_{\Sigma}.
\end{equation*}
We have $M_{\Sigma_k}(B_k) \in \text{CVX}(B_k)$ and hence by continuity $\lim_k M_{\Sigma_k}(B_k) \in \text{CVX}(B)$. As magging is unique ($\Sigma$ is positive definite), this yields a contradiction.

Consider $b_g^0$ with $g \in A(B^0,\Sigma^0) $. By the assumptions of
Theorem \ref{coverage}, $M_{\Sigma^0}(B^0) = \sum_{i \in A(B^0,\Sigma^0)}
\alpha_i b_i^0$ with $0 < \alpha_i < 1$. Hence for small $\gamma \in \mathbb{R}$, $(1-\gamma) M_{\Sigma^0}(B^0) + \gamma b_g^0 \in \text{CVX}(B^0)$ and by definition of magging
\begin{equation}\label{perpeq}
\| M_{\Sigma^0}(B^0) \|_{\Sigma^0} \le \| (1-\gamma) M_{\Sigma^0}(B^0) + \gamma b_g^0  \|_{\Sigma^0}
\end{equation}
Using this inequality for small $\gamma >0$ and small $\gamma < 0$ we obtain $\langle M_{\Sigma^0}(B^0) , b_g^0- M_{\Sigma^0}(B^0) \rangle =0$. Hence, for all $g \in A(B^0,\Sigma^0)$, $M_{\Sigma^0}(B^0)$ is perpendicular (with respect to $\langle \cdot , \cdot\rangle_{\Sigma^0}$) to $b_g^0- M_{\Sigma^0}(B^0)$. Hence $A(B^0,\Sigma^0) \subset M_{\Sigma^0}(B^0)+M_{\Sigma^0}(B^0)^\perp$.

Furthermore, by assumptions of Theorem \ref{coverage}, if $g \not \in
A(B^0,\Sigma^0) $ we have $b_g^0 \not \in M_{\Sigma^0}(B^0)+M_{\Sigma^0}(B^0)^\perp$. By
continuity, for $ B=(b_1,...,b_G)$ close to $B^0$ and $ \Sigma$ close to
$\Sigma^0$ (in $\|\cdot \|_2) $ we have $ b_g \not \in M_{\Sigma}( B)+M_{ \Sigma}( B)^\perp$. By an analogous
argument as in equation~\eqref{perpeq}, $g \not \in  A( B,\Sigma)$. This proves $ A(B^0,\Sigma^0)  \subset A( B, \Sigma) $.

It remains to show  $   A( B, \Sigma) \subset A(B^0,\Sigma^0) $: For notational simplicity let us assume $A(B^0,\Sigma^0) = \{1,\ldots,G \}$. For  $ B$ close to $B^0$ and $ \Sigma$ close to $\Sigma^0$, $M_{ \Sigma}( B) =  B  \tilde \alpha$ with $\sum_{i=1}^G \tilde \alpha_i = 1$, $0 \le \tilde \alpha_i \le 1$. We want to show that for $ B$ close to $B^0$ and $ \Sigma$ close to $\Sigma^0$ (in $\|\cdot \|_2) $, $0 < \tilde \alpha_i < 1$.

To this end, note that by the assumptions of Theorem \ref{coverage} we have that $B_{A(B^0,\Sigma^0)}^0$ (here without loss of generality: $B^0$) has full rank, hence for $ B$ close to $B^0$ and $\Sigma$ close to $\Sigma^0$, $\left( B^t  B\right)^{-1}  B^t M_{ \Sigma}( B) = \tilde \alpha$ with $\tilde \alpha_i \ge 0$, $\sum_i \tilde \alpha_i =1$. Furthermore,
\begin{equation*}
\lim_{ B \rightarrow B^0, \Sigma \rightarrow \Sigma^0} \left( B^t  B\right)^{-1}  B^t M_{ \Sigma}( B) = \left( (B^0)^t  B^0\right)^{-1}  (B^0)^t M_{ \Sigma^0}( B^0) = \alpha.
\end{equation*}
Hence for $ B$ close to $B^0$ and $ \Sigma$ close to $\Sigma^0$ (in $\|\cdot \|_2) $), $0 < \tilde \alpha_i < 1$. This concludes the proof.
\end{proof}

\begin{lemma}\label{diff2}
Let $G >2$. Let $M_\Sigma(B)= \alpha_1 b_1+\ldots+\alpha_G b_G$ with unique $0<\alpha_g <
1$ satisfying $\sum_{g=1}^G \alpha_g =1$. Then the mapping
\begin{align*}
\{ \text{positive definite matrices in $\mathbb{R}^{p \times p}$} \} &\rightarrow \mathbb{R}^p \\
\Sigma &\mapsto M_\Sigma(B)
\end{align*}
is continuously differentiable at $B$, $\Sigma$. Let $\Delta$ be a symmetric matrix. The differential in direction $\Delta$ is
\begin{equation*}
\mathrm{D}_\Sigma M_\Sigma(B)\Delta = -D ( D^t \Sigma D )^{-1} D^t \Delta M_\Sigma(B),
\end{equation*}
where
\begin{equation*}
D := (b_2,\ldots,b_G) - (b_1,\ldots,b_1).
\end{equation*}

\end{lemma}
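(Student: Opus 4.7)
My plan is to reduce the constrained convex optimization defining $M_\Sigma(B)$ to an unconstrained quadratic minimization over the affine hull of $b_1, \ldots, b_G$, and then differentiate the resulting closed form via standard matrix calculus.

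First, since $0 < \alpha_g < 1$ for every $g$, the maximin effect $M_\Sigma(B)$ lies in the relative interior of $\text{CVX}(B)$. By exactly the continuity argument carried out in Lemma~\ref{continuity}, there is an open neighborhood $U$ of $\Sigma$ inside the positive-definite cone on which $M_{\Sigma'}(B)$ still lies in the relative interior of $\text{CVX}(B)$. On $U$, the constraint $m \in \text{CVX}(B)$ in the definition of magging can therefore be relaxed to the affine constraint $m \in b_1 + \langle b_2-b_1, \ldots, b_G-b_1 \rangle$, and $M_{\Sigma'}(B)$ is the \emph{unique} minimizer of $m^t \Sigma' m$ subject to this affine constraint.

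Second, uniqueness of the convex representation forces $b_1, \ldots, b_G$ to be affinely independent, so $D = (b_2-b_1, \ldots, b_G-b_1)$ has full column rank. Combined with $\Sigma$ positive definite, this makes $D^t \Sigma D$ positive definite, hence invertible. Parametrizing the affine hull as $m = b_1 + D\beta$ and setting the gradient of $\beta \mapsto (b_1+D\beta)^t \Sigma (b_1 + D\beta)$ to zero yields
\begin{equation*}
\beta^* = -(D^t \Sigma D)^{-1} D^t \Sigma b_1, \qquad M_\Sigma(B) = b_1 - D (D^t \Sigma D)^{-1} D^t \Sigma b_1,
\end{equation*}
a closed-form expression valid throughout $U$. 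In particular $D^t \Sigma M_\Sigma(B) = 0$, which is the first-order optimality condition expressing $\Sigma$-orthogonality of $M_\Sigma(B)$ to the directions of the affine hull.

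Third, I differentiate this closed form in $\Sigma$ in direction $\Delta$ using the standard identity $\Dif[A^{-1}]\,H = -A^{-1} H A^{-1}$ applied with $A = D^t \Sigma D$ and $H = D^t \Delta D$. The product rule produces two terms; substituting the identity $D(D^t \Sigma D)^{-1} D^t \Sigma b_1 = b_1 - M_\Sigma(B)$ collapses their combination into
\begin{equation*}
\Dif_\Sigma M_\Sigma(B)\Delta = -D(D^t \Sigma D)^{-1} D^t \Delta\, M_\Sigma(B),
\end{equation*}
as claimed. Continuous differentiability in $(\Sigma,\Delta)$ is then immediate from the continuity of matrix inversion on the positive-definite cone and continuity of $M_\Sigma(B)$ in $\Sigma$ on $U$.

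The main conceptual step is the local reduction from the convex-hull constraint to the affine-hull constraint; this is where the strict interiority $0 < \alpha_g < 1$, the affine independence of the $b_g$ implied by uniqueness, and the continuity of $M_\Sigma(B)$ from Lemma~\ref{continuity} are all essential. Once this reduction is carried out, the remaining derivation of the formula is a routine application of the product rule and the derivative of the inverse.
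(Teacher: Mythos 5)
Your proof is correct, and the formula checks out: differentiating $M_\Sigma(B) = b_1 - D(D^t\Sigma D)^{-1}D^t\Sigma b_1$ with the identity $\Dif[A^{-1}]H = -A^{-1}HA^{-1}$ and then substituting $D(D^t\Sigma D)^{-1}D^t\Sigma b_1 = b_1 - M_\Sigma(B)$ does collapse the two product-rule terms to $-D(D^t\Sigma D)^{-1}D^t\Delta M_\Sigma(B)$. Your route differs from the paper's in the differentiation step. The paper never writes a closed form for $M_\Sigma(B)$: it posits $M_{\Sigma+\lambda\Delta}(B) = M_\Sigma(B) + D\gamma$, invokes the stationarity condition $M_{\Sigma+\lambda\Delta}(B)^t(\Sigma+\lambda\Delta)D = 0$ at the perturbed matrix and $M_\Sigma(B)^t\Sigma D = 0$ at the original, subtracts the two, solves the resulting linear system for $\gamma$, and passes to the limit $\lambda \to 0$ — in effect an implicit differentiation of the first-order conditions. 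You instead make the reduction to the affine-hull problem fully explicit, solve it in closed form as a $\Sigma$-orthogonal projection, and differentiate that formula by matrix calculus. Both arguments rest on the same two pillars (the strict interiority $0<\alpha_g<1$ plus the continuity from Lemma~\ref{continuity}, which let you replace the convex-hull constraint by the affine one locally, and the full column rank of $D$, which you correctly derive from affine independence — the minimal hypothesis needed, slightly weaker than the full rank of $B$ the paper cites). Your explicit formula makes the continuous differentiability claim essentially immediate, at the cost of one extra algebraic simplification; the paper's implicit version avoids the closed form but has to argue separately that the difference quotient converges. Either is a complete proof.
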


\begin{proof}
By elementary analysis, it suffices to show that the directional derivatives exist in a neighborhood and that they are continuous.

For a small symmetric pertubation  $\lambda \Delta$, by continuity of magging (Lemma~\ref{continuity}), $M_{\Sigma+\lambda\Delta}(B)$ has to satisfy
\begin{equation*}
 M_{\Sigma+\lambda\Delta}(B) = M_{\Sigma}(B)+D \gamma
\end{equation*}
for some (small) vector $\gamma  \in \mathbb{R}^{G-1}$. By definition of magging, and as $0<\alpha_g<1$ we have  $
\|M_{\Sigma+\lambda\Delta}(B) \|_{\Sigma+\lambda\Delta} \le \|
M_{\Sigma+\lambda\Delta}(B)+D \gamma'  \|_{\Sigma+\lambda\Delta}$ for all
small vectors $\gamma' \in \mathbb{R}^{G-1}$. Hence,
\begin{equation}\label{diffcond}
M_{\Sigma+\lambda\Delta}(B)^t (\Sigma+\lambda\Delta) D = 0.
\end{equation}
Putting these two conditions together, we get
\begin{equation*}
 (M_{\Sigma}(B)+D \gamma)^t (\Sigma+\lambda\Delta) D = 0.
\end{equation*}
Furthermore, analogously as in equation~\eqref{diffcond} we obtain
\begin{equation*}
M_{\Sigma}(B)^t \Sigma D = 0.
\end{equation*}
By combining the last two equations,
\begin{equation*}
\gamma^t D^t (\Sigma + \lambda \Delta)D = -M_{\Sigma}(B)^t \lambda\Delta D.
\end{equation*}
As $D^t (\Sigma + \lambda \Delta)D$ is invertible ($D$ has full rank as $B$
has full rank. $B$ has full rank as the $\alpha_g$ are unique),
\begin{align*}
\gamma^t &= -M_{\Sigma}(B)^t \lambda\Delta D (D^t (\Sigma + \lambda\Delta)D)^{-1},\\
D\gamma &= -  D(D^t (\Sigma + \lambda\Delta)D)^{-1} D^t \lambda \Delta M_{\Sigma}(B).
\end{align*}
Dividing by $\lambda$ and letting $\lambda \rightarrow 0$ gives the desired result.
\end{proof}
\begin{lemma}\label{clt}
Let $\bold{X}_{k \cdot}\sim F$, $k = 1,...,nG$ denote the i.i.d.\ rows of
$\bold{X}$. Let $\mathbb{E}[\| \bold{X}_{1 \cdot}^t \bold{X}_{1 \cdot}
\|_2^2] < \infty$ and $\Sigma^0 = \mathbb{E} [ \bold{X}_{1 \cdot}^t
\bold{X}_{1 \cdot}] $ positive definite. Then, for $n \rightarrow \infty$,
\begin{equation*}
\frac{1}{G\sqrt{n}}\sum_{k=1}^{nG} \left( \bold{X}_{k \cdot}^t \bold{X}_{k \cdot}-\Sigma^0 \right) \rightharpoonup \Delta
\end{equation*}
where the symmetric matrix $\Delta$ has centered multivariate normal distributed entries under and on the diagonal with covariance 
\begin{equation*}
c_{ijkl} := \text{Covar}(\Delta_{ij},\Delta_{kl}) = \frac{1}{G}\mathbb{E}[ ( \bold{X}_{1 i}  \bold{X}_{1 j}-\mathbb{E}[ \bold{X}_{1 i}  \bold{X}_{1 j}] )(\bold{X}_{1 k}  \bold{X}_{1 l}-\mathbb{E}[\bold{X}_{1 k}  \bold{X}_{1 l}])].
\end{equation*}
\begin{proof}
Apply the CLT.
\end{proof}
\end{lemma}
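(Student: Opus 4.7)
The plan is to reduce the claim to a direct application of the multivariate central limit theorem on the (finite-dimensional) space of symmetric $p \times p$ matrices. First I would vectorize: identify each symmetric matrix $\bold{X}_{k\cdot}^t \bold{X}_{k\cdot} - \Sigma^0$ with its upper-triangular entries, yielding a random vector in $\mathbb{R}^{p(p+1)/2}$. These vectors are i.i.d.\ across $k = 1, \ldots, nG$ because the rows $\bold{X}_{k\cdot}$ are i.i.d., and each has mean zero by the definition of $\Sigma^0$.

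Next I would check the second-moment hypothesis needed for the CLT. Each entry of the matrix $\bold{X}_{k\cdot}^t \bold{X}_{k\cdot}$ is a product $\bold{X}_{ki}\bold{X}_{kj}$; by equivalence of norms on finite-dimensional matrix spaces, the square of any individual entry is bounded by a constant times $\|\bold{X}_{k\cdot}^t \bold{X}_{k\cdot}\|_2^2$. The assumption $\mathbb{E}[\|\bold{X}_{1\cdot}^t \bold{X}_{1\cdot}\|_2^2] < \infty$ therefore gives finite variance for each coordinate, and the entry-wise covariance structure of the single random matrix $\bold{X}_{1\cdot}^t \bold{X}_{1\cdot}$ is exactly the expression written inside the expectation in the definition of $c_{ijkl}$.

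Applying the standard multivariate CLT to the i.i.d.\ sequence then yields
\begin{equation*}
\frac{1}{\sqrt{nG}} \sum_{k=1}^{nG} \left(\bold{X}_{k\cdot}^t \bold{X}_{k\cdot} - \Sigma^0\right) \rightharpoonup \Delta',
\end{equation*}
where $\Delta'$ is a symmetric matrix whose entries on and above the diagonal are jointly centered multivariate normal with covariance $\mathrm{Cov}(\Delta'_{ij},\Delta'_{kl}) = \mathbb{E}[(\bold{X}_{1i}\bold{X}_{1j}-\mathbb{E}[\bold{X}_{1i}\bold{X}_{1j}])(\bold{X}_{1k}\bold{X}_{1l}-\mathbb{E}[\bold{X}_{1k}\bold{X}_{1l}])]$. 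Finally, the normalization is reconciled by writing $\frac{1}{G\sqrt{n}} = \frac{1}{\sqrt{G}}\cdot \frac{1}{\sqrt{nG}}$, so by linearity the sum in the lemma converges weakly to $\Delta := \Delta'/\sqrt{G}$, whose covariance carries the prefactor $1/G$ exactly as stated.

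There is no real obstacle: once the vectorization is made and the moment condition is unpacked, the argument is purely bookkeeping about scaling. The only step that could mislead a hurried reader is that the number of summands is $nG$ while the divisor is $G\sqrt{n} = \sqrt{G}\cdot \sqrt{nG}$, which is precisely what produces the $1/G$ factor in the limiting covariance rather than a factor of $1$.
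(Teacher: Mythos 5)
Your proposal is correct and is exactly the argument the paper intends: the paper's proof is the single line ``Apply the CLT,'' and you have simply filled in the routine details (vectorization of the symmetric matrices, the second-moment check from $\mathbb{E}[\|\bold{X}_{1\cdot}^t\bold{X}_{1\cdot}\|_2^2]<\infty$, and the rescaling $\frac{1}{G\sqrt{n}}=\frac{1}{\sqrt{G}}\cdot\frac{1}{\sqrt{nG}}$ that produces the $1/G$ prefactor in the limiting covariance). No discrepancy with the paper's approach.
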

In the following Lemma, we want to calculate the distribution of
\begin{equation*}
 -D ( D^t \Sigma D )^{-1} D^t \Delta M_\Sigma(B).
\end{equation*}

\begin{lemma}\label{matcalc}
Let us use setting of Lemma \ref{diff2} and \ref{clt}.
\begin{equation*}
\Dif_\Sigma M_\Sigma(B) \sqrt{n} (\hat \Sigma - \Sigma) \rightharpoonup \mathcal{N}(0,V(B,\Sigma))
\end{equation*}
with
\begin{equation*}
V(B,\Sigma) =  D ( D^t \Sigma D )^{-1} D^t C D ( D^t \Sigma D )^{-1} D^t,
\end{equation*}
where 
\begin{equation*}
C_{ij} = \sum_{k,l=1}^p M_\Sigma(B)_k M_\Sigma(B)_l c_{iklj},
\end{equation*}
is the covariance matrix of $ \Delta M_{\Sigma}(B)$ and
\begin{equation*}
D := (b_2,\ldots,b_G) - (b_1,\ldots,b_1).
\end{equation*}

\end{lemma}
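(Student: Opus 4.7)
The plan is to compose the explicit derivative from Lemma~\ref{diff2} with the CLT-type weak limit from Lemma~\ref{clt}, and then read off the covariance of the resulting linear image of a Gaussian. First I would use the formula from Lemma~\ref{diff2} to write
\[
\Dif_\Sigma M_\Sigma(B)\bigl[\sqrt{n}(\hat{\Sigma}-\Sigma)\bigr]
= -D(D^t \Sigma D)^{-1} D^t\,\bigl[\sqrt{n}(\hat{\Sigma}-\Sigma)\bigr]\, M_\Sigma(B),
\]
which exhibits the quantity of interest as a fixed, continuous linear function of the random symmetric matrix $\sqrt{n}(\hat{\Sigma}-\Sigma)$: first post-multiply by the deterministic vector $M_\Sigma(B)$, then pre-multiply by the deterministic matrix $-D(D^t\Sigma D)^{-1} D^t$ (well-defined since $B$ has full rank, hence $D$ does).

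By Lemma~\ref{clt} we have $\sqrt{n}(\hat{\Sigma}-\Sigma)\rightharpoonup \Delta$, where $\Delta$ is a centered symmetric Gaussian matrix with entrywise covariances $c_{ijkl}$. The continuous mapping theorem then yields
\[
\Dif_\Sigma M_\Sigma(B)\bigl[\sqrt{n}(\hat{\Sigma}-\Sigma)\bigr]
\;\rightharpoonup\; -D(D^t \Sigma D)^{-1} D^t\, \Delta\, M_\Sigma(B).
\]
Since linear images of centered Gaussian random variables are centered Gaussian, the limit is a centered Gaussian vector in $\mathbb{R}^p$; only its covariance remains to be identified.

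I would compute this in two steps. First, by bilinearity of covariance applied entrywise,
\[
\mathrm{Cov}\bigl([\Delta M_\Sigma(B)]_i,[\Delta M_\Sigma(B)]_j\bigr)
= \sum_{k,l} M_\Sigma(B)_k\, M_\Sigma(B)_l\,\mathrm{Cov}(\Delta_{ik},\Delta_{jl}),
\]
which, using the symmetry $\Delta_{jl}=\Delta_{lj}$ to rewrite the covariance in the form $c_{iklj}$, gives exactly the matrix $C$ in the statement. Second, the covariance of a linear image $A Z$ of a centered Gaussian $Z$ with covariance $C$ is $A C A^t$; applying this with $A = -D(D^t\Sigma D)^{-1} D^t$, which is symmetric, produces the sandwich $D(D^t\Sigma D)^{-1} D^t C D(D^t\Sigma D)^{-1} D^t = V(B,\Sigma)$.

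The only subtlety — more an index-bookkeeping exercise than an obstacle — is that $\Delta$ is a symmetric random matrix, so one must be careful when pulling covariances from Lemma~\ref{clt}: the genuine independent Gaussian coordinates are the entries on and below the diagonal, and the rest are mirror copies. Checking that the bilinear expansion above handles the mirrored entries correctly (via $\Delta_{jl}=\Delta_{lj}$) is what makes the index pattern $c_{iklj}$ in the final formula appear rather than $c_{ikjl}$. Beyond this, the argument is mechanical from the delta-method framework already in place.
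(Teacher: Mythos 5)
Your proposal is correct and follows essentially the same route as the paper: compose the linear map $\Delta \mapsto -D(D^t\Sigma D)^{-1}D^t\Delta M_\Sigma(B)$ from Lemma~\ref{diff2} with the weak limit of Lemma~\ref{clt}, then identify the covariance of $\Delta M_\Sigma(B)$ by the bilinear expansion $\mathbb{E}\sum_{k,l}\Delta_{ik}M_\Sigma(B)_k M_\Sigma(B)_l\Delta_{lj}=\sum_{k,l}M_\Sigma(B)_k M_\Sigma(B)_l c_{iklj}$ and sandwich with the symmetric matrix $D(D^t\Sigma D)^{-1}D^t$. The only quibble is your side remark calling the on-and-below-diagonal entries of $\Delta$ ``independent'' --- they are merely the free, jointly Gaussian coordinates --- but this does not affect the argument.
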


\begin{remark}
In the proof of Theorem \ref{maintheorem}, we could assume that without loss of generality $\{1,\ldots,G \}=A(B,\Sigma)$, i.e.\ $B = B_{A(B,\Sigma)}$. For using the definition of $V$ in the context of Theorem \ref{coverage} and \ref{maintheorem}, replace in the definition $B$ by $B_{A(B,\Sigma)}$. The $G$ in the definition of $C$ stays the same, i.e.\ it is still the total number of groups.
\end{remark}

\begin{proof}

With Lemma \ref{diff2} and \ref{clt} it suffices to calculate the distribution of 
\begin{equation*}
 -D ( D^t \Sigma D )^{-1} D^t \Delta M_\Sigma(B),
\end{equation*}
i.e.\ the nontrivial part is to calculate the distribution of $\Delta M_\Sigma(B)$. We know it is Gaussian and centered, hence it suffices to determine the covariance matrix:
\begin{align*}
\mathbb{E} \left( \Delta M_\Sigma(B) M_\Sigma(B)^t  \Delta \right)_{ij} &= \mathbb{E} \sum_{k,l=1}^p \Delta_{ik} (M_\Sigma(B) M_\Sigma(B)^t)_{kl}  \Delta_{lj} \\
&= \sum_{k,l=1}^p M_\Sigma(B)_k M_\Sigma(B)_l \mathbb{E}  \Delta_{ik}  \Delta_{lj} \\
&=\sum_{k,l=1}^p M_\Sigma(B)_k M_\Sigma(B)_l c_{iklj}.
\end{align*}
In the last line we used Lemma \ref{clt}. This concludes the proof.
\end{proof}

\subsection{Proof of Theorem \ref{coverage}}

\begin{proof}
First, note that by Lemma \ref{diff1}, $W(\Sigma^0,B^0)$ is invertible. Using Lemma \ref{continuity}, in a neighborhood of $B^0$ and $\Sigma^0$ the set-valued function $A(B,\Sigma)$ is constant. Hence, by Lemma \ref{diff1} and Lemma \ref{diff2}, the derivatives of $M_{\Sigma}(B)=M_{\Sigma}(B_{A(B,\Sigma)})$ are continuous at $B^0$ and $\Sigma^0$. Furthermore, $V(B_{A(B,\Sigma)},\Sigma)$ is continuous in $C$ and in $B$ and $\Sigma$ at $B^0$ and $\Sigma^0$. All together, $W(\Sigma,B)$ is continuous at $B^0$ and $\Sigma^0$ in all its variables. By the definition of $C$ in Lemma~\ref{matcalc} and the definition of $\hat C$ in Section~\ref{defs}, $\hat C \rightarrow C$.

Hence, $W(\hat \Sigma,\hat B) \rightarrow W( \Sigma^0,B^0)$ in probability and we obtain that $W(\hat B,\hat \Sigma)^{-1} \rightarrow W(B^0,\Sigma^0)^{-1}$ in probability. By Theorem \ref{maintheorem} and Slutsky's Theorem we obtain
\begin{equation*}
\sqrt{n}(M_{\hat \Sigma}(\hat B)-M_{\Sigma^0}(B^0))^t W(\hat B,\hat \Sigma)^{-1} \sqrt{n} (M_{\hat \Sigma}(\hat B)-M_{\Sigma^0}(B^0)) \rightharpoonup \chi^2(p)
\end{equation*}
for $n \rightarrow \infty$. Hence
\begin{align*}
&\mathbb{P} [ M_{ \Sigma^0}( B^0) \in \bold{C}(\hat \Sigma,\hat B)] \\
=&\mathbb{P} [ (M_{\hat \Sigma}(\hat B)-M_{\Sigma^0}(B^0))^t W(\hat B,\hat \Sigma)^{-1}  (M_{\hat \Sigma}(\hat B)-M_{\Sigma^0}(B^0)) \le \frac{\tau}{n}]\\  \rightarrow &1-\alpha
\end{align*}
for $n \rightarrow \infty$. This concludes the proof.

\end{proof}

\subsection{Relaxation-based approach}\label{relaxation}

A simple approach is as follows: For given $\alpha > 0$, take random sets $\mathcal{R}_B$, $\mathcal{R}_\Sigma$ such that

\begin{equation*}
\mathbb{P}[\Sigma^0 \in \mathcal{R}_\Sigma, B^0  \in \mathcal{R}_B] \ge 1-\alpha,
\end{equation*}
where $B^0 = (b_1^0,\ldots,b_G^0)$ is the matrix of regression coefficients in all $G$ groups. A generic approach is to choose a confidence region for $\Sigma^0$ on the confidence level $1-\alpha/2$ and confidence regions for $b_g^0$ on the confidence level $1-\alpha/(2G)$. However, this approach can easily be improved by taking larger regions around $\hat b_g$ that are far away from zero (thus have negligible influence on $M_{\hat \Sigma}(\hat B)$)  and smaller regions around $\hat b_g$ that are close to zero. Then calculate
\begin{equation*}
\mathcal{R} = \{ M_{\tilde \Sigma}(\tilde B) : \tilde \Sigma \in \mathcal{R}_\Sigma, \tilde B \in \mathcal{R}_B
\} \subset \mathbb{R}^p,
\end{equation*}
which is a $1-\alpha$ confidence region for the maximin effect. However, direct computation of this confidence region is computationally cumbersome.

For known $\Sigma^0$ the idea can be relaxed to the following scheme:

For $m \in \mathbb{R}^p$ and $\Sigma \in \mathbb{R}^{p \times p}$ positive definite let us define $\|m\|_{\Sigma} := \sqrt{m^T\Sigma
  m}$. Note that this defines a norm on $\mathbb{R}^p$. Now, 
\begin{align*} \|M_{\Sigma^0}(B')\|_{\Sigma^0} 
=&\min_{\gamma \ge 0, \sum_{g=1}^G \gamma_g =1} \|B' \gamma \|_{\Sigma^0} \\
=& \min_{\gamma \ge 0, \sum_{g=1}^G \gamma_g =1} \|B' \gamma \|_{\Sigma^0}-\|B \gamma\|_{\Sigma^0}+\|B \gamma\|_{\Sigma^0} \\
\le& \sup_{\gamma \ge 0, \sum_{g=1}^G \gamma_g =1} | \|B' \gamma \|_{\Sigma^0}-\|B \gamma \|_{\Sigma^0} | + \min_{\gamma \ge 0, \sum_{g=1}^G \gamma_g =1} \|B \gamma\|_{\Sigma^0} \\
\le &\sup_{\gamma \ge 0, \sum_{g=1}^G \gamma_g =1}  \|(B'-B) \gamma \|_{\Sigma^0} + \min_{\gamma \ge 0, \sum_{g=1}^G \gamma_g =1} \|B \gamma\|_{\Sigma^0}\end{align*}
and hence
\begin{align*}
\|M_{\Sigma^0}(B')\|_{\Sigma^0}  \le & \sup_{\gamma \ge 0, \sum_{g=1}^G \gamma_g =1} \sum_{g=1}^G \gamma_g \| b'_g-b_g\|_{\Sigma^0} + \min_{\gamma \ge 0, \sum_{g=1}^G \gamma_g =1} \|B \gamma\|_{\Sigma^0}\\ 
= &  \max_{g=1,\ldots,G} \| b'_g-b_g \|_{\Sigma^0} + \min_{\gamma \ge 0, \sum_{g=1}^G \gamma_g =1} \|B \gamma\|_{\Sigma^0} \\
= &  \max_{g=1,\ldots,G} \| b'_g-b_g \|_{\Sigma^0} +  \|M_\Sigma(B)\|_{\Sigma^0}
\end{align*}
By symmetry,
\begin{align}\label{genapproach}
|\|M_{\Sigma^0}(B')\|_{\Sigma^0}-\|M_{\Sigma^0}(B)\|_{\Sigma^0} | \le  \max_{g=1,\ldots,G} \| b'_g-b_g \|_{\Sigma^0}.
\end{align}
We can now choose a covering of the confidence region $\mathcal{R}_B$ with $B^{(k)} \in \mathcal{R}_B$,$k=1,\ldots,K$ such that balls
$\mathcal{B}_{\epsilon_k}(B^{(k)})$ with radius $\epsilon_k$ around $B^{(k)}$ cover
 $\mathcal{R}_B$ with respect to the maximum norm $\|B \|_{\mbox{max}} :=
\max_g \|b_g \|_{\Sigma^0}$. 

A confidence region of the maximin effect can then be constructed as
\begin{equation*}
\tilde{\mathcal{R}} = \bigcup_{k=1,\ldots,K}  \{ M : | \| M \|_{\Sigma^0} -\|
M_{\Sigma^0}(B^{(k)}) \|_{\Sigma^0} | \le \epsilon_k \} \cap \text{CVX} \left(  \mathcal{B}_{\epsilon_k}(B^{(k)}) \right).
\end{equation*}
This confidence region is valid: For all $M_{\Sigma^0}(B') \in \mathcal{R}_B$ there
exists $k \in \{ 1,\ldots,K\}$ such that $\| B'-B^{(k)}
\|_{\text{max}} \le \epsilon_k$. By equation \eqref{genapproach},
$|\|M_{\Sigma^0}(B')\|_{\Sigma^0}-\|M_{\Sigma^0}(B^{(k)})\|_{\Sigma^0} | \le
\epsilon_k$, hence $M_{\Sigma^0}(B') \in \tilde{\mathcal{R}}_{B}$. This implies  $\mathcal{R}_B
\subset \tilde{\mathcal{R}}_B $;
\begin{equation*}
\mathbb{P}[M_{\Sigma^0}(B^0)  \in \tilde{\mathcal{R}}] \ge
\mathbb{P}[M_{\Sigma^0}( B^0)  \in \mathcal{R}] \ge \mathbb{P}[ B^0  \in \mathcal{R}_B] \ge 1-\alpha.
\end{equation*}
If $\Sigma^0$ is unknown, using the approach above we need to estimate lower and upper bounds for $\| \cdot \|_{\Sigma^0}$.
\end{document}